\documentclass{statsoc}

\pdfoutput=1

\usepackage[a4paper]{geometry}

\usepackage{times}
\usepackage{comment}

\usepackage{etoolbox}

\makeatletter
\patchcmd{\@makecaption}
  {\parbox}
  {\advance\@tempdima-\fontdimen2} 
  {}{}
\makeatother

\usepackage{natbib}
\bibliographystyle{rss}

\usepackage{amsfonts} 
\usepackage{amscd} 
\usepackage{amsmath} 
\usepackage{graphicx}
\usepackage{epstopdf}
\usepackage{booktabs}
\usepackage{pgf}
\usepackage{array, multirow}
\usepackage{nicefrac}

\usepackage{hyperref}
\usepackage{amsmath,hyperref,amsfonts,amssymb,mathtools}
\usepackage{bm}
\newcommand{\mb}[1]{\bm{#1}}


\newcommand\ceil[1]{\left\lceil#1\right\rceil}

\newcommand{\<}{\langle}
\renewcommand{\>}{\rangle}

\DeclareMathOperator{\pP}{Pr} 
\newcommand{\pN}{\textrm{N}} 
\newcommand{\pExp}{\textrm{Exp}} 

\DeclareMathOperator{\E}{E} 
\DeclareMathOperator{\Var}{var}
\DeclareMathOperator{\Cov}{cov}

\newcommand{\pE}{\E}

\newcommand{\pCov}{\Cov}

\newcommand{\e}{\mathrm{e}}
\newcommand{\oms}{\mb{\omega}_s}
\newcommand{\omt}{\omega_t}

\newcommand{\mv}[1]{\bm{#1}}
\DeclareMathOperator{\diag}{diag}
\newcommand{\mat}[1]{\begin{bmatrix}#1\end{bmatrix}}

\newcommand{\wt}[1]{\widetilde{#1}}

\newcommand{\wh}[1]{\widehat{#1}}

\newcommand{\mmd}{\mathrm{d}}
\newcommand{\md}{\,\mathrm{d}}

\newcommand{\s}{\bm{s}}

\newcommand{\W}{\mathcal{W}}
\newcommand{\EE}[1][]{\mathcal{E}_{#1}}
\newcommand{\dEE}[1][]{\md\EE[#1]}

\newcommand{\RR}{\mathbb{R}}
\newcommand{\sphere}{\mathbb{S}}
\newcommand{\cD}{\mathcal{D}}
\newcommand{\ZZ}{\mathbb{Z}}


\newcommand{\Mo}{{\text{M}_\text{0}}}
\newcommand{\Ma}{{\text{M}_\text{A}}}
\newcommand{\Mb}{{\text{M}_\text{B}}}
\newcommand{\Mc}{{\text{M}_\text{C}}}
\newcommand{\Md}{{\text{M}_\text{D}}}

\newcommand{\qqed}{\qed}


\newtheorem{lemma}{Lemma}
\newtheorem{proposition}{Proposition}
\newtheorem{theorem}{Theorem}
\newtheorem{corollary}{Corollary}

\newtheorem{definition}{Definition}


\graphicspath{{figures/}}

\title[A diffusion-based spatio-temporal extension of Gaussian Mat\' ern fields]{A diffusion-based spatio-temporal extension of Gaussian Mat\' ern fields}
\author{Finn Lindgren\footnote{{\it Address for correspondence:} Finn Lindgren, School of Mathematics, The University of Edinburgh, James Clerk Maxwell Building, Peter Guthrie Tait Road, Edinburgh EH9 3FD, Scotland, UK\newline E-mail: Finn.Lindgren@ed.ac.uk}}
\address{
School of Mathematics, The University of Edinburgh, Scotland}
\author{Haakon Bakka}
\address{Norwegian Veterinary Institute, {\AA}s, Norway}
\author[Finn Lindgren, Haakon Bakka, David Bolin, Elias Krainski and H{\aa}vard Rue]{David Bolin, Elias Krainski  and H{\aa}vard Rue}
\address{CEMSE Division, King Abdullah University of Science and Technology, Saudi Arabia}

\begin{document}

\begin{abstract}
        Gaussian random fields with Mat\'ern covariance functions are popular models in spatial statistics and machine learning.
        In this work, we develop a spatio-temporal extension of the Gaussian Mat\'ern
        fields formulated as solutions to a stochastic partial differential equation.
        The spatially stationary subset of the models have marginal spatial Mat\'ern covariances,
        and the model also extends to Whittle-Mat\'ern fields on curved manifolds, and to more general
        non-stationary fields. In addition to the parameters of the spatial dependence
        (variance, smoothness, and practical correlation range) it additionally has parameters
        controlling the practical correlation range in time, the smoothness in time, and
        the type of non-separability of the spatio-temporal covariance. Through the
        separability parameter, the model also allows for separable covariance functions.
        We provide a sparse representation based on a finite element approximation, that
        is well suited for statistical inference and which is implemented in the R-INLA software.
The flexibility of the model is illustrated in an application to spatio-temporal modeling of global temperature data.
\end{abstract}

\section{Introduction}

\subsection{Modelling spatio-temporal data}
Statistical models for spatio-temporal data have
applications in areas ranging from the analysis of
environmental data \citep{cameletti2013}
and climate data \citep{wood2004climate,fuglstad2020compression},
to resource and risk modeling (e.g., of wildfires, \citet{serra2014wildfires}),
 disease modeling \citep{inlaexample2015malaria,moraga2019geospatial},
and ecology \citep{yuan2017,zuur2017spatial}.
These models typically use spatio-temporal random effects,
defined as Gaussian spatio-temporal stochastic processes
and rely on a large body of theoretical and methodological literature
\citep[][and references therein]{stein2012interpolation, gelfand2010handbook, CressieWikle2011}.

At best, this theory is carefully studied when the spatio-temporal model is constructed,
so that the model with the most appropriate assumptions can be used.
In practice, however, users of statistical software often choose
a model based on convenience.
If there are available code examples,
the choices made in these will often be carried forward into future analyses.
For example, users of R-INLA \citep{rue2009,rue2017bayesian,art664,art693,art702,art703} construct
space-time models through
Kronecker products of a spatial Mat\' ern model,
and first- or second-order autoregressive models in time,
following the code examples in  \citet{krainski2018advanced}.
This paper is aimed at improving the general practice of space-time data analysis,
by providing a new family of spatio-temporal stochastic processes
for use as random effects in statistical software.

We will mainly discuss
stochastic processes $u(\s,t)$
that are stationary and spatially isotropic, i.e., the
covariance function can be written as
$\pCov(u(\s_1, t_1), u(\s_2, t_2)) = R(h_s, h_t),$
where $h_s = ||\s_1 - \s_2||$ and $h_t = |t_1 - t_2|$,
but will also extend these process models to spatial non-stationarity
and processes on general manifolds.
We consider these stochastic processes in the context of hierarchical models,
as a latent model component, observed through some measurement process,
with no direct measurements of the stochastic process itself.
Consider, for example, a model with a linear predictor
\begin{align}
\eta(\s, t) = \sum_{i=1}^m X_i(\s,t) \beta_i + f_1\{z_1(\s,t)\} +
  \ldots + f_k\{z_k(\bm s,t)\} + u(\s, t),
  \label{eq:pred}
\end{align}
that is connected to the response $y$ through
some likelihood
or loss function \citep{bissiri2016general} such that $\pE\{y(\s,t)\} = g\{\eta(\s,t)\}$ for some fixed and known function $g$.
Here $X_i$ and $z_j$ are covariates that vary over both space and time, $\beta_i$ the regression coefficient for the fixed effects, and
$f_j(z_j)$ are random effects. Typical examples are splines and latent Gaussian processes
used to approximate the effect  of altitude or distance to coastline.
%
This common situation with a stochastic process as a model component
impacts the methodological considerations we make.
The predictor is also a spatio-temporal stochastic process, with
a covariance function that can be deduced from
the assumptions on the
model components. However, properties of the predictor that we may
discover by investigating the covariance function of the predictor may not be shared by the spatio-temporal model component $u$ because of the other factors.
Hence, we may have little prior information about the covariance structure of the
spatio-temporal model component,
except that
it should be physically realistic,
and should mimic the dependency structure in models
of physical processes.

Users of software for spatio-temporal modelling
most often use separable models (see, e.g., \citet{bakka2018spatial,krainski2018advanced}), i.e., models where $u$ has a covariance function of the form $R( h_s, h_t) = R_s(h_s) R_t(h_t),$
for some spatial and temporal marginal covariance functions $R_s(\cdot)$ and $R_t(\cdot)$.
This is typically not because this is a desired property, but since such models are
readily available in statistical software, and there are many good arguments
for why models should not be assumed separable, see
\citet{stein2005space}, \citet{cressie1999classes},
\citet{fonseca2011general}, \citet{rodrigues2010class},
\citet{gneiting2002nonseparable}, \citet{sigrist2015stochastic},
\citet{wikle2015modern}.

\subsection{The Mat\'ern family of covariance functions}
The most well known family of covariance functions for stationary random fields on $\mathbb{R}^d$ is
the Mat\'ern covariance,
\begin{align}
R_M(h) =   \frac{\sigma^2}{2^{\nu-1} \Gamma(\nu)} \left( \kappa h\right)^\nu K_\nu \left( \kappa h \right),
\end{align}
where $\nu,\kappa>0$ are smoothness and scale parameters, $\sigma^2$ is the variance of the corresponding random field,
$K_\nu $ is the bessel function of the  second kind of order $\nu$, and
$\Gamma$ is the Gamma function.
An important property of this covariance faily is that it allows for explicit control of the differentiability of the corresponding stochastic process through the parameter $\nu$. It further allows for control of the practical correlation range $r = \sqrt{8\nu}/\kappa$ \citep{lindgren2011explicit}.
The covariance function is usually attributed
to \citet{matern1960spatial}, and it was advocated early by
\citet{handcock1993bayesian} and \citet{stein2012interpolation}.
See \citet{guttorp2006studies} for  a historical account of the covariance function and its
connections to various areas in physics.

The goal of this paper is to extend the Mat\' ern covariance function to
a family of spatio-temporal covariance functions. One way of doing this would be to
extend the covariance function to a spatio-temporal covariance.
However, we argue that it is better to base the extension on some of the other equivalent mathematical representations,
or \emph{views}, of Gaussian Mat\' ern fields.
One such alternative representation is the stochastic partial differential equation (SPDE) representation by \citet{whittle1963stochastic}.
Specifically, a Gaussian Mat\'ern field on $\mathbb{R}^d$ solves the SPDE
\begin{align}
(\kappa^2 - \Delta)^{\alpha/2} u = \mathcal W, \label{eq:simpleSpde}
\end{align}
where $\kappa>0$, $\Delta$ is the Laplacian,
 $\mathcal W$ is Gaussian white noise, and $\alpha = \nu + d/2$.
Via the SPDE representation, we note that a Gaussian Mat\'ern field has
precision operator ${Q= (\kappa^2 - \Delta)^{\alpha}}$.
The precision operator (as well as the pseudo-differential operator $(\kappa^2 - \Delta)^{\alpha/2}$) are defined in terms of Fourier transforms \citep{lindgren2011explicit}, and informally, we get the Fourier transform
of the precision operator by replacing derivatives with $d$-dimensional wave-numbers $\bm w$.  For any precision operator which is a polynomial in the Laplacian,
$Q = p(-\Delta)$, such as the Mat\'ern operator with $\alpha \in \mathbb N$,  this results in a polynomial
$\mathcal F(Q) = p(\|\bm w\|^2)$.
This function is the reciprocal of the spectrum of the Gaussian process,
illustrating why many common spectrums are the reciprocal of an even polynomial.
In fact, \citet{rozanov1977} showed that a stationary stochastic process on $\mathbb{R}^d$ is Markov if and only if
the spectral density is the reciprocal of a polynomial,
and more generally, a stochastic process is Markov if the precision operator is a local operator, which is the case for integer powers of the Laplacian.
For further details on the theory of the SPDE representation, see \citet{kelbert2005fractional, prevot2007concise, lindgren2011explicit, bolin_rational_2020}.

We could also represent a Gaussian Mat\'ern field as a stochastic integral with respect to white noise. For Gaussian Mat\'ern fields, the kernel in the integral representation is the Green's function of the differential operator \citep[see, e.g.,][]{bolin13}.
This representation can be used to define other valid covariance functions by replacing the Green's function with some other kernel
 \citep[see, e.g.,][]{fuentes2002spectral, higdon2002space, rodrigues2010class}.

The modeling approaches stemming from these different views of the Gaussian Mat\'ern fields can be thought of as implicit and explicit. In implicit approaches such as the covariance-based representation,
one does not have a direct formulation of the process itself, and properties of interest need to
be derived from the covariance function.
In explicit, or constructive, approaches
one directly defines the process through, e.g., an SPDE or a stochastic integral
with the desired properties encoded.
In this paper we follow the explicit approach to construct
a stochastic process based on diffusion processes.
Other properties, such as covariance non-separability,
are then merely consequences of the explicit construction.

\subsection{SPDE-based spatio-temporal generalisations of the Mat\' ern covariance family}

There is a large literature on spatio-temporal covariance models \citep[see, e.g.,][and the references within]{Porcu2021}. Broadly, models for spatio-temporal  Gaussian random fields can be divided into two categories; the implicit second-order covariance based models and explicit dynamical models \citep{CressieWikle2011, roques_spatial_2022}.
It should be noted that \citet{Porcu2021}, contrary to this terminology, classifies the SPDE-based methods as implicit since they do not explicitly specify the covariance function. However, the covariance functions is merely a property
of the process, and only indirectly defines the process family, whereas dynamical models
directly determine the spatial and temporal evolution of the process.  As shown by
\citet{lindgren2011explicit}, the covariance does not have an inherent advantage
over spectral and precision operator/matrix methods, for practial applications and computations.

The second-order model specifications specify the Gaussian process properties
by specifying its first two moments, and are thus based on formulating valid spatio-temporal covariance functions. In dynamical model specifications, the evolution of the Gaussian process is explicitly described either by specifying the conditional distributions of the current state of the process given its past through conditional distributions \citep[e.g.,][]{storvik2002stationary}, or by specifying the process as the solution to an SPDE \citep{CressieWikle2011}. One of the advantages with the dynamical approach is that it avoids the difficulties with formulating flexible and yet valid spatio-temporal covariance functions that can possess features such as non-separability or non-stationarity. In this work we focus on dynamical models specified through SPDEs, which makes extension to non-stationary fields
and manifold models straightforward.

Several papers have been using the SPDE view to suggest models for spatio-temporal
stochastic processes. A common extension of Mat\'ern covariance fields to space-time
is to use it as the spatial component in a separable model.
\citet{jones1997models} discuss how separable
covariance functions can be understood through differential operators,
written as $L = L_s L_t$,
where $L_s$ is a purely spatial operator and
$L_t$ is a purely temporal operator.
In agreement with \citet{jones1997models}, we note that
these operators are almost never encountered when modeling physical reality,
hence, separable models are typically not physically motivated models
for the spatio-temporal process.

\citet{whittle1963stochastic} considered a spatio-temporal stochastic process formulated as a solution to
\begin{align}
\frac{\partial u}{\partial t} + (\kappa^2 - \Delta) u(\mv{s},t) = \epsilon(\mv{s},t), \label{eq:whittlespacetime}
\end{align}
where $\epsilon(\bm s,t)$ is a stationary spatio-temporal noise process. \citet{Whittle1986} denoted the model as a ``diffusion-injection model'' since it is a diffusion processes with stochastic variability ``injected'' through the noise process on the right-hand side. Despite being a natural spatio-temporal extension of the Mat\'ern model \eqref{eq:simpleSpde} with $\alpha=2$, the model does not have any flexibility in terms of differentiability in space or time. \citet{jones1997models} proposed a generalization, with greater flexibility for the marginal spatial covariances, by considering the fractional SPDE
\begin{align}\label{eq:jones}
\left( \frac{\partial}{\partial t} + \left( \kappa^2 - \Delta \right)^{\alpha/2} \right) u(\mv{s},t)= \dEE(\mv{s},t),
\end{align}
where $\dEE$ is space-time Gaussian white noise. When requiring spatial operator order $\alpha>d$, this SPDE has regular continous solutions. In order to allow smaller operator orders $\alpha$, such as a dampened ordinary diffusion operator with $\alpha=2$ on $\mathbb{R}^2$, as in \citet{whittle1963stochastic}, the driving noise process would need to have spatial dependence. We will make this precise in later sections.
An advantage with \eqref{eq:jones} is that the spatial smoothness can be controlled, since the solutions on the spatial domain $\RR^d$ have smoothness $\nu_s = \alpha - d/2$.
The disadvantage is that the temporal smoothness also is determined by $\alpha$. As we will see later, the marginal temporal differentiability of the solution,  is $\nu_t = (1-d/\alpha)/2$.

A model with general differentiability in both space and time was formulated by \citet{stein2005space}, who consider Gaussian spatio-temporal models specified through the spectrum
\begin{equation}\label{eq:sstein}
S(\bm{w}_s, w_t) = \{
c_1(a_1^2+\|\bm{w}_s\|^2)^{\alpha_1} +
c_2(a_2^2+|w_t|^2)^{\alpha_2} \}^{-\nu},
\end{equation}
where $c_1>0, c_2>0, a_1, a_2$ are scale parameters, $a_1^2 + a_2^2 >0$,
$\alpha_1$, $\alpha_2$ and $\nu$ are
smoothness parameters with further restrictions in order to obtain a model with finite variance. For example, on a two-dimensional spatial and one-dimensional temporal domain,
$2/\alpha_1 + 1/\alpha_2 < 2 \nu$ is required.
Stein's model can also be stated as an SPDE driven by space-time white noise,
\begin{align}\label{eq:steinspde}
\left(c_1(a_1^2 - \Delta)^{\alpha_1} +
c_2\left(a_2^2-\frac{\partial^2}{\partial t^2}\right)^{\alpha_2}\right)^{\nu/2}u(\mv{s},t) = \dEE(\mv{s},t),
\end{align}
see \citet{krainski2018statistical} and \citet{vergara_general_2022}.
A related model based on spectral densities, which also has separable models as a special case, was considered by \citet{fuentes2008class}.

The case $\alpha=2$ of \eqref{eq:jones} for general dimension was considered in
\citep[][Section 3.5]{lindgren2011explicit}, suggesting the generalisation
\begin{align}
\left( \frac{\partial}{\partial t} + \kappa^2 + \mb m \cdot \nabla - \nabla \cdot \mv{H} \nabla \right) u(\mb s, t) = \dEE[Q](\mv{s}, t),
\end{align}
where $\mv{H}$ is a constant diffusion matrix, $\mb m$ is an advection (transport) vector field, and the innovation process $\dEE[Q](\mv{s}, t)$ white noise in time but is sufficiently smooth in space to generate regular solutions $u(\mb s,t)$; see \citet{lindgren2011explicit} and \citet[Sec 2.2]{sigrist2015stochastic}.
Physically, this model might be interpreted as a dampened advection-diffusion process, with the driving mechanism of the space-time field, such as introducing new mass (or, particles) into the system, having positive spatial correlation. See also \citet{liu_statistical_2022,clarottoetal2022advdiff}.

In this work, we introduce another generalisation of the models by \citet{jones1997models} and \citet{lindgren2011explicit}
that intersects, but is otherwise distinct from, the Stein model family.

\subsection{Outline}
In Section \ref{sec:demfdefine} we introduce a new family of
SPDE-based spatio-temporal stochastic processes.
Model properties such as spatial and temporal differentiability, and parameter interpretations, are presented in Section \ref{sec:properties}.
We present a sparse basis function representation in
Section \ref{sec:gmrfrepr},
and an implementation in R-INLA \citep{rue2009}
in Supplementary Materials,
which allows us to construct models with different likelihoods
and several random effects in a generalised additive model context.
In Section \ref{sec:applic}, we present a forecasting example
that illustrates clearly the difference between
separable models and non-separable diffusion-based models,
and an application to a global temperature dataset. The article concludes
with a discussion in Section~\ref{sec:discussion}.

\section{A diffusion-based family of spatio-temporal stochastic processes}
\label{sec:demfdefine}

In this section we define a
diffusion-based extension of the Gaussian Mat\'ern
fields to a family of spatio-temporal
stochastic processes (abbreviated DEMF).
The main property we aim for is that the
process should be a Gaussian Mat\'ern field when considered for a
fixed time point in $\RR^d$. That is, then the process is considered on the spatial domain $\cD= \mathbb{R}^d$, the spatial marginalisations of the process have Mat\'ern covariances.
When the models are considered on a general (compact) manifold $\cD$, the spatial marginalisations are solutions to a generalised spatial Whittle-Mat\'ern model on $\cD$
\citep{lindgren_spde_2022}.

Consider again the operator $L_s=\gamma_s^2 - \Delta$ on a spatial domain $\cD$,
including any boundary conditions needed for compact domains.
and introduce the precision operator for the generalised Whittle-Mat\'ern covariances as
$Q(\gamma_s, \gamma_e, \alpha)=\gamma_e^2 L_s^{\alpha}$,
corresponding to solutions $v(\mb s)$ to the spatial stochastic SPDE
\begin{align}
\gamma_e L_s^{\alpha/2} v(\mb s) &= \W(\mb s), \qquad \mv{s} \in \cD
\end{align}
where $\mathcal{W}$ is a spatial white noise process, as discussed by
\citet{whittle1963stochastic} and \citet{lindgren2011explicit}.
When $\cD=\RR^d$, and a stationary condition is imposed, these processes are regular
Mat\'ern processes.
We then define a noise process $\dEE[Q](\mb s,t)$ as Gaussian noise
that is white in time but correlated in space, with precision operator $Q=Q(\gamma_s,\gamma_e,\alpha_e)$ for some non-negative $\alpha_e$.
For $a>0$, the cumulative time-integral process
\begin{align}
\EE[Q](\mb s,(0,a]) = \int_{t=0}^a \dEE[Q](\mb s,t)
\end{align}
is a Q-Wiener
process \citep{da2014stochastic}, with spatial precision operator $Q/a$.

The case of a separable covariance model with a Mat\'ern covariance in space and an exponential covariance in time is obtained from the stationary solutions to
\begin{align}\label{eq:simplest}
\left( \frac{\partial}{\partial t} + \kappa \right) u(\s, t) = \dEE[Q](\s, t), \qquad (\mv{s},t)\in\cD\times \mathbb{R}.
\end{align}
This is a spatial generalisation of the Ornstein-Uhlenbeck processes.
We aim to produce
a space-time model with diffusive behaviour.
For this, we replace the dampening coefficient $\kappa$ in \eqref{eq:simplest}
with a power of the dampended diffusion operator $L_s$, defining a model family
of the time-stationary solutions to iterated diffusion-like processes
\begin{equation}\label{eq:spde_general_nonfractional}
\left(\gamma_t \frac{d}{dt} + L_s^{\alpha_s/2}\right)^{\alpha_t} u(\bm s, t) = \dEE[Q](\bm s, t),
\qquad (\mv{s},t)\in\cD\times \mathbb{R}.
\end{equation}
When $\cD= \mathbb{R}^d$, the space-stationary solutions are used. For compact manifolds with boundary, the operators $L_s$ and $Q$ are equipped with suitable boundary conditions on $\partial\cD$.
In total, the model has three non-negative smoothness parameters
$(\alpha_t,\alpha_s,\alpha_e)$ and three
positive scale parameters $(\gamma_t,\gamma_s,\gamma_e)$. It is not immediately obvious how the definition \eqref{eq:spde_general_nonfractional} would be interpreted for non-integer powers $\alpha_t$. However, by taking advantage of the spectral properties of the operators, we define the following model, which has an operator that more clearly allows fractional powers $\alpha_t$, as
\begin{equation}\label{eq:spde_general}
\left(-\gamma_t^2 \frac{d^2}{dt^2} + L_s^{\alpha_s}\right)^{\alpha_t/2} u(\bm s, t) = \dEE[Q](\bm s, t),
\qquad (\mv{s},t)\in\cD\times \mathbb{R}.
\end{equation}

\begin{theorem}
For $\cD= \mathbb{R}^d$, as well as for other domains where $L_s$ has well defined positive powers,
the definitions \eqref{eq:spde_general_nonfractional} and \eqref{eq:spde_general} of the Gaussian process $u(\mv{s},t)$ coincide for $\alpha_t\in\mathbb{N}$.
\end{theorem}

\begin{proof}
This can be seen by applying the techniques developed in \citet{vergara_general_2022}. Alternatively, the transfer function $G(\omega_t)$ \citep[see][Chapter 4]{LindgrenStocProc} for the temporal linear filter defined by the operator in \eqref{eq:spde_general_nonfractional} is $G(\omega_t)=(i\gamma_t\omega_t+L_s^{\alpha_s/2})^{\alpha_t}$, well-defined for positive integers $\alpha_t$, and has $|G(\omega_t)|^2=(\gamma_t^2\omega_t^2+L_s^{\alpha_s})^{\alpha_t}$. The transfer function $H(\omega_t)$ for the temporal linear filter defined by the operator in \eqref{eq:spde_general} is $H(\omega_t)=(\gamma_t^2\omega_t^2+L_s^{\alpha_s})^{\alpha_t/2}$, well-defined for positive $\alpha_t$. We see that $|G(\omega_t)|^2=|H(\omega_t)|^2$, so the spectral properties of the two process definitions coincide for positive integer $\alpha_t$ values.
\qqed
\end{proof}

It should be noted that it would be possible to give a more direct definition of the model \eqref{eq:spde_general_nonfractional} with fractional $\alpha_t$, but this
would require more sophisticated mathematical tools, which is outside the scope of this work.
The two representations make it clear that the model with $\alpha_e=0$ is a
special case of the \citet{stein2005space} model family, with with $a_1=0$ and $\alpha_1=1$ in \eqref{eq:steinspde}, and that the model of \citet{jones1997models} is obtained by setting $\alpha_e=0$ and $\alpha_t=1$ in \eqref{eq:spde_general_nonfractional}.

The use of the same spatial operator $L_s$ in the left hand side of \eqref{eq:spde_general}
as in the precision operator on the right hand side is what causes the spatial
marginalisation of the process to be Mat\'ern fields in the simplest case, as will be shown
in Section~\ref{sec:properties}.
The parameters $\alpha_t$, $\alpha_s$, and $\alpha_e$
determine the differential operator orders involved in the SPDE operator and therefore also the smoothness properties of the process, as shown in Section~\ref{sec:properties}.

The model can be further generalised by allowing the $\gamma$ parameters to
vary across space. This is most straightforward for $\gamma_s$, since that
only alters the $L_s$ operator.
For complex domains, as well as when $L_s$ is generalised to
vary across space, the resulting solutions are not space-stationary, but
still have marginal spatial properties defined by powers of $L_s$.
The practical precision construction in Section~\ref{sec:gmrfrepr} can be generalised
to separable non-stationarity, where $\gamma_t$ is allowed to depend on time
and $\gamma_s$ and $\gamma_e$ depends on space, since that retains commutativity
between the temporal and spatial operators.

\label{sec:nonstationary-Ls}

\subsection{Compact domains and manifolds}

For compact domains, the model definitions include some form of boundary conditions.
These boundary conditions induce boundary effects near the domain boundary,
and as shown in \citet{lindgren2011explicit}, if such effects are undesirable,
one can extend the domain by at least the spatial range.  By taking advantage of
a non-stationary spatial operator, the barrier method introduced by
\citet{bakka2019nonstationary} can also be used to nearly eliminate boundary effects,
as well as to obtain models that appropriately take complex geography into account.
Similarly, all the common extensions onto curved manifolds, such as the globe,
can be implemented using the same approaches as for $\RR^d$. This includes the finite
element methods used in Section~\ref{sec:gmrfrepr}, but also Fourier-like
spectral basis function expansions given by the eigenfunctions of the Laplacian,
either given in closed form, e.g.,\ spherical harmonics on the globe, or obtained
numerically from finite element eigenfunction computations. See \citet{lindgren_spde_2022}
for an overview of the literature on these alternative methods.

\section{Parameter interpretations and model properties}
\label{sec:properties}
In this section we discuss
marginal spatial and temporal properties of the diffusion-based model \eqref{eq:spde_general_nonfractional}.
In order to simplify the exposition, we focus on the ordinary Mat\'ern covariance
case when the spatial domain is $\cD= \mathbb{R}^d$.
In this case, the space-time spectral density of the stationary solutions
$u(\mb s, t)$ to \eqref{eq:spde_general} is
\begin{equation}\label{eq:spectrum}
S_{u}(\mb{\omega}_s, \omega_t) = \frac{1}{
(2\pi)^{d+1}\gamma_e^2[\gamma_t^2\omega_t^2+
(\gamma_s^2+\| \mb{\omega}_s \|^2)^{\alpha_s}]^{\alpha_t}
(\gamma_s^2 + \| \mb{\omega}_s \|^2)^{\alpha_e}},
\end{equation}
for $(\mb{\omega}_s, \omega_t) \in\mathbb{R}^d\times\mathbb{R}$.
The space-time covariance function is given by the Fourier integral
\begin{equation}\label{eq:covariance}
R_{u}(\s, t) = \int_\RR\int_{\RR^d} \exp[i (\mb{\omega}_s\cdot\s + \omega_t t)]
S_u(\mb{\omega}_s,\omega_t) \md\s \md t
\end{equation}
for spatial lags $\s$ and temporal lags $t$.

\subsection{Sample path continuity and differentiability theory}

For fields with Mat\'ern covariance functions, the degree of differentiability is encoded
in the smoothness index $\nu$. For models with space-time spectral density given by \eqref{eq:spectrum}, the marginal covariance in time is not generally of the Mat\'ern class, so we need to use more general conditions for determining the smoothness.

The differentiability of a stationary process $x(t)$, $t\in\mathbb{R}$,
is determined by the decay rate of its spectral density.
If $S(\omega) \sim \omega^{-\gamma}$ for some $\gamma>0$ for large $\omega$,
then the process is $a$ times mean square
differentiable for all $a < \frac{\gamma - 1}{2}$ \citep{stein2005space}.
For stationary Gaussian processes, stronger statements of almost sure
sample path continuity of derivatives and Hölder continuity can be made.
The technical details can be found in Section~9.3 of \citet{cramer1967book}
and \citet{scheuerer_regularity_2010}, and are summarised in
Appendix~\ref{sec:path-continuity}, including a more formal characterisation of
the smoothness index.  The results show that Gaussian processes with spectral
densities satisfying $S(\mb{\omega})\sim\|\mb{\omega}\|^{-2\nu-d}$ for some $\nu>0$
and large $\|\mb{\omega}\|$ have
smoothness index $\nu$. This means that the sample paths have almost surely
continuous derivatives of order up to and including $k=\lceil\nu\rceil-1$, and
that the derivatives of order $k$ are Hölder of index $a$ for any $0<a<\nu-k$.
Further, the sample paths are almost surely in the Sobolev spaces $W^{b,2}$
for any $b<\nu$, on finite subsets of $\RR^d$.  Although these results are derived
specifically for $\RR^d$, it is clear that sample path properties of Whittle-Mat\'ern
fields on more general but smooth domains will have similar, and usually identical,
local differentiability
properties, based on the decay rate of the the eigenspectrum of the Laplacian.
In particular, the spectral Fourier representations on the 2D sphere $\sphere^2$
lead to series that converge under the same conditions as the continuous spectra
on $\RR^2$.

The smoothness index $\nu$ can be interpreted as the smallest value
for which some form of weak continuity does \emph{not} hold.
For a process on a multidimensional domain with potentially different smoothness
in different directions, Theorem~\ref{thm:cramer} and smoothness definition in
Appendix~\ref{sec:path-continuity} will be applied to the one-dimensional marginals of
the process.

\subsection{Properties of the spatio-temporal model}

We can now show that the spatial marginals of $u(\mb s, t)$, i.e.\ for fixed $t$,
are Mat\'ern covariance fields,
given that the smoothness parameters are chosen appropriately.
To keep some notational brevity, we first define the unit variance and range Mat\'ern
covariance function $R^M_\nu(t)$,
\begin{equation}
\label{eq:matern-cov}
R^M_{\nu}(t)
=\frac{1}{\Gamma(\nu)2^{\nu-1}}t^{\nu}K_{\nu}(t),\quad t\geq 0,
\end{equation}
and the scaling constants
$$
C_{\RR^d,\alpha} = \frac{\Gamma(\alpha-d/2)}{\Gamma(\alpha) (4\pi)^{d/2}},
$$
for $d=1,2,3,\dots$ and $\alpha > d/2$. These appear as variance scaling constants
for the regular Whittle-Mat\'ern SPDE models.

\begin{proposition}\label{prop:spatial_cov}
Define the effective spatial marginal operator order
$\alpha = \alpha_e + \alpha_s(\alpha_t-\nicefrac12)$ and assume that $\alpha>\nicefrac{d}{2}$.
Then the solution $u(\mb s, t)$ to \eqref{eq:spde_general} has marginal spatial
covariance function
$$
\pCov(u(\mb{s}_1,t),u(\mb{s}_2,t)) =
\sigma^2 R^M_{\nu_s}(\gamma_s\|\mb{s}_2-\mb{s}_1\|)
$$
where $\nu_s = \alpha-d/2$ is the spatial smoothness index and
\begin{equation}\label{eq:variance}
\sigma^2 = \frac{C_{\RR,\alpha_t} C_{\RR^d,\alpha}}
{\gamma_e^2 \gamma_t\gamma_s^{2\alpha-d}}.
\end{equation}
\end{proposition}

\begin{proof}
	See appendix \ref{proof:spatial_covar}, that also includes a derivation of the
	marginal spatial cross-spectra for different time lags.
	\qqed
\end{proof}

\begin{proposition}\label{prop:tempsmooth}
Assume $\alpha_t, \alpha_s, \alpha_e$ satisfy $\alpha > \nicefrac{d}{2}$.
Then the temporal smoothness index of the solutions
$u(\mb s, t)$ to \eqref{eq:spde_general}
is
	$
	\nu_t = \min\left[\alpha_t-\frac1{2}, \frac{\nu_s}{\alpha_s}  \right],
	$
and for $d=2$, the marginal temporal spectrum is
	$$
	S_t(\omega_t) \propto {}_{2}F_{1}\left(\alpha_t, \frac{\alpha_e - 1}{\alpha_s} + \alpha_t, \frac{\alpha_e - 1}{\alpha_s} + \alpha_t + 1; -\omega_t^2 \gamma_t^2 / \gamma_s^{2\alpha_s}\right),
	$$
	where ${}_{2}F_{1}$ denotes the hypergeometric function.
\end{proposition}
\begin{proof}
	See appendix \ref{proof:tempsmooth}.
	\qqed
\end{proof}

For integer values of the operator orders,
the hypergeometric function
can be expressed using elementary functions.
When $\alpha_t=\alpha_s = 2$ and $\alpha_e = 0$ for $d=2$, we obtain
\begin{equation}
S_t(\omega_t) \propto \int_{0}^{\infty} \frac{1}{(\wt{\omega}_t^2 + (1+v)^{2})^2}\md v = \frac{\arctan(\wt{\omega}_t)}{2\wt{\omega}_t^3} - \frac{1}{2\wt{\omega}_t^2(\wt{\omega}_t^2+1)},
\end{equation}
where $\wt{\omega}_t = \omega_t\gamma_t/\gamma_s^{\alpha_s}$,
showing that the marginal temporal covariance is not a Mat\' ern covariance.
The exception is the separable case, where the temporal covariance function is
a Mat\'ern covariance function with smoothness index
$\alpha_t-1/2$.

\begin{corollary}\label{coroll:separable_cov}
	Assume that $\alpha_s = 0$, $\alpha_t>1/2$, and $\alpha_e>d/2$.
	Then the stationary solutions $u(\mb s, t)$ to \eqref{eq:spde_general}
	have a separable space-time covariance function where
	the spatial covariance is given by
	Proposition~\ref{prop:spatial_cov} and
	the marginal temporal covariance function is
	$$
	C(u(\mb{s},t_1),u(\mv{s},t_2)) = \sigma^2 R^M_{\nu_t}(\gamma_t^{-1} |t_2-t_1|),
	$$
	where $\nu_t = \alpha_t-1/2$ and $\sigma^2$ is given by \eqref{eq:variance}
	with $\alpha=\alpha_e$.
\end{corollary}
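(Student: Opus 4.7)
The plan is to read the separability and the temporal Matérn form directly off the joint spectrum \eqref{eq:szlw1}, specialised to $\alpha_s=0$. With $\alpha_s=0$ the factor $(\gamma_s^2+\|\mv{\omega}_s\|^2)^{\alpha_s}$ collapses to $1$, so the joint spectrum becomes
\begin{equation*}
S_u(\mv{\omega}_s,\omega_t)
 = \frac{1}{(2\pi)^3\gamma_e^2(1+\gamma_t^2\omega_t^2)^{\alpha_t}(\gamma_s^2+\|\mv{\omega}_s\|^2)^{\alpha_e}},
\end{equation*}
which factorises as $S_u=S_s(\mv{\omega}_s)\,S_t(\omega_t)$. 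Since the inverse Fourier transform of a product of a spatial factor and a temporal factor is the corresponding product of covariances, separability of $C(h_s,h_t)$ is immediate. This is the only conceptual step; the remainder is bookkeeping.

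Next I would handle the spatial marginal. With $\alpha_s=0$ the definition $\alpha=\alpha_e+\alpha_s(\alpha_t-1/2)$ reduces to $\alpha=\alpha_e$, so $\nu_s=\alpha_e-1$, and the assumption $\alpha_e>1$ is exactly the hypothesis $\alpha>1$ of Proposition~\ref{prop:spatial_cov}. Applying that proposition gives the claimed spatial Matérn covariance with the variance $\sigma^2$ from \eqref{eq:variance}; no further calculation is required here.

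For the temporal marginal I would integrate out $\mv{\omega}_s$: since $S_u$ factorises and the spatial factor is integrable when $\alpha_e>1$, the marginal temporal spectrum is proportional to
\begin{equation*}
(1+\gamma_t^2\omega_t^2)^{-\alpha_t}
 = \gamma_t^{-2\alpha_t}\bigl(\gamma_t^{-2}+\omega_t^2\bigr)^{-\alpha_t}.
\end{equation*}
This is, up to a constant, the classical one-dimensional Matérn spectrum with smoothness $\nu_t=\alpha_t-1/2$ and scale parameter $\kappa_t=\gamma_t^{-1}$; the condition $\alpha_t>1/2$ is exactly the condition $\nu_t>0$. Inverting the Fourier transform via the standard Matérn spectral identity yields the stated functional form $(\gamma_t^{-1}|t_1-t_2|)^{\nu_t}K_{\nu_t}(\gamma_t^{-1}|t_1-t_2|)$.

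What remains is to pin down the constant. The natural convention for separable models is that each marginal evaluates to the joint variance at zero lag, so I would fix $\sigma^2$ by requiring $C(0,0)$ computed from Proposition~\ref{prop:spatial_cov} to agree with $C_t(0)$ times the spatial correlation at zero lag. The main nuisance of the proof is therefore threading the Gamma-function and $(2\pi)$ factors through the spatial integration and the 1D Matérn inverse Fourier transform; this is routine but easy to miscount. A useful sanity check is that the hypergeometric expression in Proposition~\ref{prop:tempsmooth} degenerates: with $\alpha_s=0$, the arguments of ${}_2F_1$ are no longer meaningful, but taking the limit of the proof of that proposition (before the hypergeometric representation is invoked) should reduce to the direct Matérn identification above, confirming that Corollary~\ref{coroll:separable_cov} is exactly the boundary case of the temporal spectrum formula.
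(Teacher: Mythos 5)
Your proposal is correct and follows essentially the same route as the paper's proof: setting $\alpha_s=0$ makes the joint spectrum \eqref{eq:szlw1} factor into a spatial and a temporal part, and integrating out $\mv{\omega}_s$ leaves the one-dimensional Mat\'ern spectrum $(1+\gamma_t^2\omega_t^2)^{-\alpha_t}$, identified with smoothness $\nu_t=\alpha_t-1/2$ and scale $\gamma_t^{-1}$. Your version is somewhat more explicit than the paper's (which leaves the separability observation and the normalising constant implicit), but there is no substantive difference in approach.
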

\begin{proof}
Follows directly from the product form of the space-time spectrum
\eqref{eq:spectrum}.
\qqed
\end{proof}

In Table~\ref{tab1},
we summarise the general smoothness results,
as well as some important special cases. The special cases denoted \emph{diffusion}
are generalised analogues of the diffusion-injection model \eqref{eq:whittlespacetime},
and the special \emph{critical diffusion} model is later used in Sections~\ref{sec:gmrfrepr}
and \ref{sec:applic}.
The general conditions on the $\alpha$ parameters that give well defined solutions are encoded in the spatial and temporal smoothness conditions $\nu_s>0$ and $\nu_t>0$, and can also be written as the conditions $\alpha=\alpha_e+\alpha_s(\alpha_t-1/2)>d/2$ and $\alpha_t>1/2$.

\begin{table}
\caption{Summary of the smoothness properties of the solutions $u(\mv{s}, t)$ for different values of the parameters $\alpha_t, \alpha_s, \alpha_e$, together with some examples. Here $\nu_t$ and $\nu_s$ respectively denote the temporal and spatial smoothnesses of the process.
		\label{tab1}}
	\resizebox{0.95\linewidth}{!}{\begin{tabular}{cccccc}
		\toprule
		$\alpha_t$ & $\alpha_s$ & $\alpha_e$ & Type & $\nu_t$  	& $\nu_s$\\
		\cmidrule(r){1-6}
		$\alpha_t$ & $\alpha_s$ & $\alpha_e$ & General & $\min\left[\alpha_t-\frac1{2}, \frac{\nu_s}{\alpha_s}  \right]$ & $\alpha_e + \alpha_s(\alpha_t-\frac12)-\frac{d}{2}$\\
		\cmidrule(r){1-6}
		$\alpha_t$ & 0 & $\alpha_e$ & 	Separable & $\alpha_t-\frac1{2}$ & $ \alpha_e -\frac{d}{2}$\\
		$\alpha_t$ & $\alpha_s$ & $\frac{d}{2}$ & Critical & $\alpha_t-\frac1{2}$ &  $\alpha_s(\alpha_t-\frac1{2})$\\
		$\alpha_t$ & $\alpha_s$ & 0 & Fully non-separable & $\alpha_t-\frac1{2} - \frac{d}{2\alpha_s}$ &  $\alpha_s(\alpha_t-\frac1{2}) -\frac{d}{2}$\\
		\cmidrule(r){1-6}
		$1$ & $2$ & $\alpha_e>\frac{d}{2}$ & Sub-critical diffusion & $1/2$ & $\alpha_e+1-\frac{d}{2}$\\
		$1$ & $2$ & $\frac{d}{2}$ & Critical diffusion & $1/2$ & $1$\\
		$1$ & $2$ & $\frac{d}{2}-1<\alpha_e< \frac{d}{2}$
		& Super-critical diffusion & $\nu_s/2$ & $\alpha_e+1-\frac{d}{2}$\\
		\cmidrule(r){1-6}
		$1$ & $0$ & $2$ & Separable & $1/2$ & $2-\frac{d}{2}$ \\
		$3/2$ & $2$ & $0$ & Fractional diffusion &  $1-\frac{d}{4}$ & $2-\frac{d}{2}$ \\
		$2$ & $2$ & $0$ & Iterated diffusion &  $\frac32-\frac{d}{4}$ & $3-\frac{d}{2}$ \\
		\bottomrule
	\end{tabular}}

\end{table}

\subsubsection{Quantifying non-separability}

From Table~\ref{tab1} we can see that the $\alpha_e$ parameter controls the
type of non-separability. An important case is $\alpha_e = 0$,
which we refer to as fully non-separable models. The spectral density for such models
is a subfamily of the \cite{stein2005space} spectral model family.
The degree of non-separability can be quantified by the relation between
$\alpha_e$ and the effective marginal spatial operator order $\alpha$.
We introduce the non-separability parameter
$\beta_s = 1 - \alpha_e/\alpha = 1 - \alpha_e/(\nu_s+d/2) \in [0, 1]$,
where $\beta_s=0$ gives a separable model, and
$\beta_s=1$ gives a ``maximally non-separable'' model.
Assuming given values for the temporal smoothness $\nu_t>0$, spatial smoothness $\nu_s>0$, and non-separability $\beta_s\in[0,1]$, we can find the corresponding values of $(\alpha_t,\alpha_s,\alpha_e)$. Let $\beta_*(\nu_s,d)=\frac{\nu_s}{\nu_s+d/2}$. Then
\begin{align*}
\alpha_t &= \nu_t \max\left(1, \frac{\beta_s}{\beta_*(\nu_s,d)}\right)+\frac{1}{2},\\
\alpha_s &=
\frac{\nu_s}{\nu_t} \min\left(\frac{\beta_s}{\beta_*(\nu_s,d)},1\right)
=
\frac{1}{\nu_t} \min\left[(\nu_s+d/2)\beta_s,\nu_s\right]
,\\
\alpha_e &= \frac{1-\beta_s}{\beta_*(\nu_s,d)} \nu_s = (\nu_s+d/2)(1-\beta_s).
\end{align*}
The critical branching point $\beta_s=\beta_*(\nu_s,d)$ motivates the term
\emph{critical} for such models.
Models with $\beta_s<\beta_*(\nu_s,d)$ are \emph{sub-critical} and models with
$\beta_s>\beta_*(\nu_s,d)$ are \emph{super-critical}.  The critical models have $\alpha_t=\nu_t+1/2$,
$\alpha_s=\nu_s/\nu_t$, and $\alpha_e=d/2$. The \emph{diffusion} models in
Table~\ref{tab1} with $\alpha_t=1$ and $\alpha_s=2$ are of particular interest,
as they arise from a basic heat equation. Notably, the fully non-separable
diffusion model DEMF(1,2,0) requires $d=1$ to ensure $\nu_s>0$, whereas the fully non-separable
twice iterated diffusion model DEMF(2,2,0) is valid for $d\in\{1,2,3,4,5\}$.

\subsubsection{Scale parameter interpretation}

To improve the interpretability of the scale parameters,
we define  $\sigma$, $r_s$, and $r_t$ via
\begin{align}
\sigma^2 &= \frac{C_{\RR,\alpha_t} C_{\RR^d,\alpha}}{\gamma_t \gamma_e^2 \gamma_s^{2\alpha-d}} \\
r_s &= \gamma_s^{-1} \sqrt{8\nu_s} \\
r_t &= \gamma_t \gamma_s^{-\alpha_s} \sqrt{8 (\alpha_t-1/2)},
\end{align}
where $r_s$ is the correlation range as
in \citet{lindgren2011explicit},
giving approximately correlation of 0.13 at $r_s$
distance in space (keeping time fixed).
Similarly, $r_t$ controls the temporal correlation range
for the separable model. In the non-separable cases, it is the temporal correlation range for the evolution of the spatial eigenfunction corresponding to the smallest
eigenvalue of the Laplacian, i.e.\ a constant function over space, evolving in time.
Eigenfunctions for larger spatial eigenvalues have shorter temporal correlation range,
so the combined effective range will typically be smaller than the nominal $r_t$ value
would indicate.

\subsection{Examples}
\label{sec:model-examples}

\begin{table}
\caption{Four specific DEMF models on $\RR^d$.
		\label{tab2}}
	\begin{tabular}{ccccccc}
		\toprule
		Model & $\alpha_t$ & $\alpha_s$ & $\alpha_e$ & Type & $\nu_t$  	& $\nu_s$\\
		\cmidrule(r){1-7}
		A: DEMF(1,0,2) & $1$ & $0$ & $2$ & Separable order 1 & $1/2$ & $1$ \\
		B: DEMF(1,2,1) & $1$ & $2$ & $1$ & Critical diffusion & $1/2$ & $1$\\
		C: DEMF(2,0,2) & $2$ & $0$ & $2$ & Separable order 2 &  $3/2$ & $1$ \\
		D: DEMF(2,2,0) & $2$ & $2$ & $0$ & Iterated diffusion &  $1$ & $2$\\
		\bottomrule
	\end{tabular}
\end{table}

Consider the four models on $\RR^2$ defined in Table~\ref{tab2},
and choose the $\gamma$ parameters so that $\sigma=1$ and $r_s=1$ for each model.
Further, $\gamma_t$ is chosen so that the nominal $r_t$ value is $1$, so we can
compare the non-Mat\'ern behaviour of the temporal correlation to the
spatial Mat\'ern behaviour.

	In general, the covariances are not available in closed form, but since the
	temporal covariance for each spatial frequency is of Mat\'ern type, the spatial
	cross-spectra (derived in Appendix~\ref{proof:spatial_covar}) can be inverted numerically to obtain the cross-covariance.
	Specifically, the cross-covariance can be computed numerically with a 2D fast Fourier transform (FFT)
	computation for each fixed temporal lag (see Appendix~\ref{app:spec2cov}).
	This technique is related to the
	half-spectral space-time covariance models from \citet{horrell_half-spectral_2017}. There, they
	focus on models where the temporal spectrum is known for each spatial location,
	$\mathcal{F}_t R(\s,t) = f(\omega_t)g(\s,\omega_t)$,
	but the theory also covers the case of known spatial spectrum for each time point,
	$\mathcal{F}_s R(\s,t) = f(\mb{\omega}_s)g(\mb{\omega_s},t)$, that we use here.

	In Figure~\ref{fig:covariancesB} we
	show the spatio-temporal
	covariance function for
	these four models, and the marginal spatial covariances are shown
	in Figure~\ref{fig:covariancesA}.
	There is a clear difference between the
	spatio-temporal covariances,
	even though the marginal spatial covariances are identical for the first three models.

\begin{figure}[t]
	\begin{center}
		\includegraphics[width=0.95\linewidth]{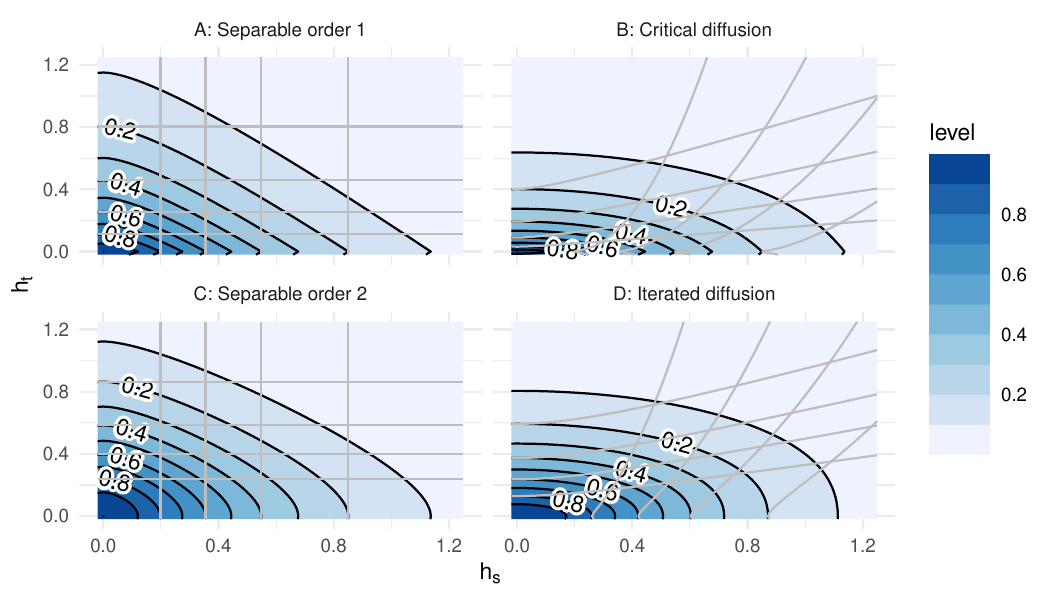}
	\end{center}
\vspace{-0.7cm}
	\caption{The space-time covariance functions for spatial dimension $d=2$,
	for the four models from Table~\ref{tab2}, Section~\ref{sec:model-examples}.
	The grey overlayed curves are level curves of the the relative decay of
	the spatial and temporal covariances in relation to the marginal covariances.
	The non-separable models have non-orthogonal decay.
	\label{fig:covariancesB}
	}
\end{figure}

\begin{figure}[t]
	\begin{center}
		\includegraphics[width=0.8\linewidth]{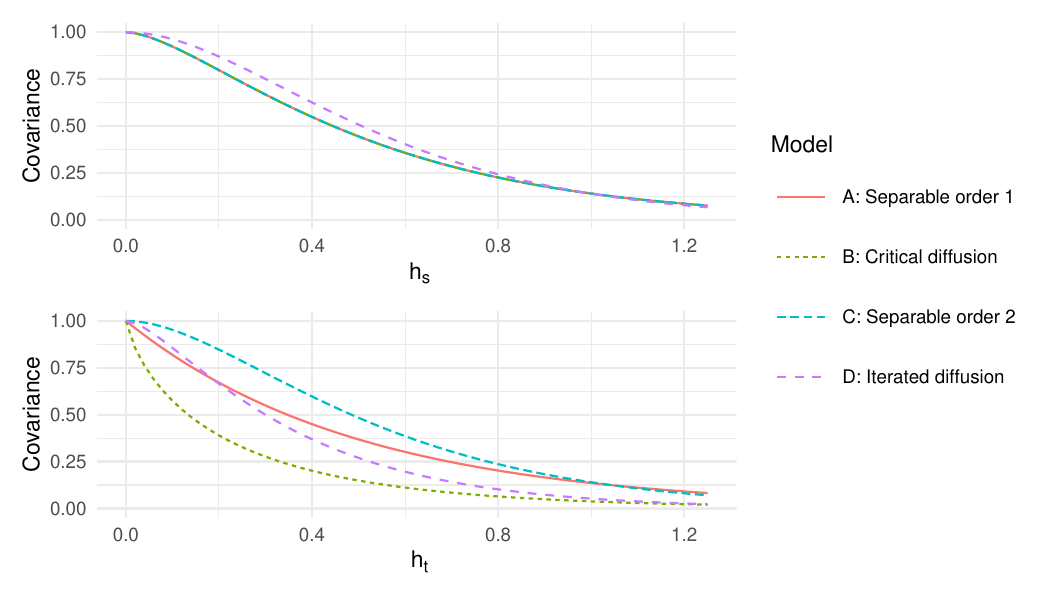}
	\end{center}
\vspace{-0.7cm}
	\caption{The marginal spatial and
		temporal covariances of for spatial dimension $d=2$,
	for the four models from Table~\ref{tab2}, Section~\ref{sec:model-examples}.
		The spatial correlation is approximately 0.13
		when the distance equals the range $r_s$. For the temporal correlations,
		that relationship to $r_t$ only holds for the contribution from the evolution
		of a spatial constant, and the effective range has a more complex structure,
		depending on the combined model parameter.
	}
	\label{fig:covariancesA}
\end{figure}

\subsection{Spheres and other manifolds}
As noted earlier, the marginal spatial covariance properties of the DEMF models
on general manifolds are rooted in the properties of the Whittle-Mat\'ern operator,
and depend on the specific geometry. However, the temporal structure is linked to each
spatial frequency in the same way for every manifold, so we can focus on the
effects on the spatial properties.
Smoothness properties intuitively follow from the local properties of the differential operator
on smooth manifolds, that locally behave like $\RR^d$, so that is not the main
obstacle to determining the process properties.
Instead, it is the effect of the manifolds intrinsic curvature that prevents general
closed form expressions for the covariance functions to be derived.
On a compact manifold $\cD$, the covariance function for models based on
$L_s^{\alpha/2}=(\gamma_s^2-\Delta)^{\alpha/2}$
(where $\alpha=\alpha_e+\alpha_s(\alpha_t-1/2)$ in the DEMF models)
take the form
$$
R(\s,\s') = \sum_{k=0}^\infty C_k \frac{1}{(\gamma_s^2+\lambda_k^2)^{\alpha}} E_k(\s)E_k(\s'), \quad \s\in \cD,
$$
where $(\lambda_k,E_k)$ are the eigenvalue/function pairs of the
$-\nabla\cdot\nabla$ (negated Laplace-Beltrami) operator on $\cD$, and $C_k$ are
scaling constants that depend on potential scaling of the eigenfunctions and
multiplicity of eigenvalues. This was used in \citet{lindgren2011explicit} to
show that the finite element constructions for Whittle-Mat\'ern fields work on general manifolds.
On the sphere, the eigenfunctions are the spherical harmonics, with eigenvalues $\lambda_k=k(k+1)$
with multiplicity $2k+1$. With the spherical harmonic definitions in
Appendix~\ref{sec:spherical-harmonics}, the resulting covariance can be simplified to
\begin{align}
\label{eq:spherical-covariance}
R_{\sphere^2,\alpha}(\s,\s';\gamma_s) &=
\sum_{k=0}^\infty \frac{2k+1}{4\pi [\gamma_s^2 + k(k+1)]^\alpha} P_{k,0}(\s\cdot\s') ,
\end{align}
where $P_{k,0}(\cdot)$ are Legendre polynomials of order $k$, and the factor $\frac{2k+1}{4\pi}$
comes from the eigenvalue multiplicity and Fourier-Bessel transform theory on the sphere
(see Appendix~\ref{sec:spherical-harmonics}).
It follows from the construction that the infinite series for the covariances of
the process derivatives that the differentiability properties on the sphere
are the same as on $\RR^2$, as the terms $\lambda_k^a\frac{2k+1}{[\gamma_s^2+\lambda_k]^\alpha}$
decay at the same rate as required for the smoothness criteria on $\RR^2$ from
Appendix~\ref{sec:path-continuity}.

Due to the wraparound effects on the sphere, the spatial variance contribution to the
overall field variance is not the same as on $\RR^2$, and the factor $C_{\RR^d,\alpha}/\gamma_s^{2\alpha-d}$ in \eqref{eq:variance} needs to be replaced by a function of $\gamma_s$ defined by
\begin{align}
\label{eq:spherical-variance}
C_{\sphere^2,\alpha}(\gamma_s) &= \sum_{k=0}^\infty \frac{2k+1}{4\pi [\gamma_s^2 + k(k+1)]^\alpha} ,
\end{align}
obtained from the spectral representation of a spherical Whittle-Mat\'ern field.
The overall variance can then be written as
$\Var[u(\s,t)] =
\frac{C_{\RR,\alpha_t}}{\gamma_e^2\gamma_t}
 C_{\sphere^2,\alpha}(\gamma_s),$
and the asymptotic behaviour of $C_{\sphere^2,\alpha}(\gamma_s)$ as $\gamma_s$ approaches $0$ or $\infty$ is given by
\begin{align*}
C_{\sphere^2,\alpha}(\gamma_s) =
\sum_{k=0}^\infty \frac{2k+1}{4\pi [\gamma_s^2 + k(k+1)]^\alpha}
&\sim
\begin{cases}
\frac{1}{4\pi \gamma_s^{2\alpha}}, & \gamma_s\rightarrow 0, \\
\frac{1}{4\pi (\alpha-1) \gamma_s^{2\alpha-2}}, & \gamma_s\rightarrow \infty.
\end{cases}
\end{align*}
This shows that for large $\gamma_s$, i.e.\ short spatial ranges,
the variance of the field $u(\s,t)$ on the sphere is the same as on $\RR^2$, but
for small $\gamma_s$, i.e.\ long spatial ranges, the spherical geometry leads
to larger variance than on $\RR^2$.
For intermediate $\gamma_s$ values, the upper tail of the infinite series can be
bounded by tractable integrals, which also allows bounding the relative error
in numerical covariance and variance evaluation,
by replacing the upper series tail from $k=K$ by the integral
$
\int_{K+1/2}^\infty \frac{2k+1}{4\pi [\gamma_s^2 + k(k+1)]^\alpha} \md k .
$
More details are given in Appendix~\ref{sec:spherical-variance-approximation}.

\section{Hilbert space representation}
\label{sec:gmrfrepr}

The discussion up to this point has focused on the general continuous domain properties of the
proposed model class.  We will now discuss aspects of numerical implementations,
suitable for inclusion in generalised additive latent Gaussian models, as available in the
\texttt{INLA} and \texttt{inlabru} packages for \texttt{R}.  The general construction
is appliccable to a wide range of basis function representations.  In practice,
we will use the finite element approach from \citet{lindgren2011explicit} due it's
computational convenience, in particular in the unstructured spatial observation location
and manifold domain contexts.

\subsection{Hilbert space approximation}

We consider general Kronecker product basis expansions
\begin{equation}
u(\s,t) = \sum_{i=1}^{n_s}\sum_{j=1}^{n_t}\psi_i(\s)\phi_j(t) u_{ij},
\label{eq:kronecker-expansion}
\end{equation}
where $\{\psi_i(\s);i=1,\dots,n_s\}$ and $\{\phi_j(t);j=1,\dots,n_t\}$
are finite basis sets for Hilbert spaces on a spatial domain $\cD$ and a time interval $[T_0,T_1]\subset\RR$, respectively.  We will show that
projection onto the resulting Kronecker function space only involve
integrals of the form $\<\phi_j,\phi_{j'}\>$,
$\<(-\Delta)^{k/4}\phi_j,(-\Delta)^{k/4}\phi_{j'}\>$,
and $\<L_s^{k/2}\psi_i,L_s^{k/2}\psi_{i'}\>$.
This is possible due to the lack of interaction in the individual model operators;
the operator as a whole is non-separable, but each operator term is space-time separable.
This also extends to the case of a non-stationary $L_s$ operator, as mentioned
in Section~\ref{sec:nonstationary-Ls}.

Different choices of spatial and temporal basis functions have benefits and drawbacks
depending on the specific modelling and data context.
A natural choice for the spatial domain is local piecewise linear basis functions. Such functions were used in \citet{lindgren2011explicit} to construct model representations with sparse precision matrix structure
for the basis expansion coefficients, via Gaussian Markov random fields (GMRF).
This allows a large number of basis functions to be used,
and pointwise georeferenced observations will not alter the sparseness of the posterior
precision matrix, making this a versatile approach, that can also be used in combination with sparse
matrix solvers developed for ordinary deterministic PDE computations.
For very smooth processes, the piecewise linear basis functions can in principle
be replaced by higher order local polynomials \citep{liu_efficient_2016}, but
this can be difficult to implement. For non-stationary $L_s=\gamma_s(\s)-\Delta$, the spatially varying $\gamma_s(\s)$ values
only have a local influence on the finite element construction, so the additional computational
complexity lies mainly on the increased number of parameters needed to represent
the spatial variation of $\gamma_s(\cdot)$.

An alternative to piecewise linear basis functions are Harmonic basis functions based on the eigenfunctions of the Laplacian. These can be very efficient on domains that admit fast Fourier inversion algorithms,
such as $\RR^d$ and partially on $\sphere^2$.  However, the diagonal precision
matrix structure implied by the basic models is broken by scattered georeferenced
observations, as the resulting posterior precision matrix becomes dense, so
the utility is greatest for very smooth processes that can cut off the
harmonics at a long spatial range.
So-called \emph{conditioning by kriging} can also be applied in such cases, but this
is computationally expensive for large numbers of observations unless the number of basis functions
is kept small.
A further complication on general domains and manifolds is the lack of closed form
expressions for the harmonics. Computing them with e.g.\ finite element methods
is as expensive as applying the piecewise linear basis GMRF representations directly.
They are also impractical for non-stationary operators, since the precision matrices
will typically become dense instead of diagonal.

A third alternative is Karhunen-Lo\`eve expansions, which yield better approximations for fewer basis functions than harmonic basis. They can handle non-stationary operators, but needs recomputing the basis for each set of parameter values, making inference expensive.  For irregular data, it has the same problem of turing a sparse
prior precision matrix into a dense posterior precision matrix.
However, for given parameters, it can in principle be applied to the posterior distribution instead.
Unfortunately, the numerical computations for each eigenfunction is at least as expensive as computing the posterior expectation using the \emph{same} numerical method (e.g., finite elements) as in the GMRF computations, making the full computation much more expensive, and best suited to special cases such as computing a compact representation of a given, fixed, distribution.

Despite their practical numerical cost and other related problems, the harmonic basis and K-L expansions are excellent tools for theoretical analysis, and their discrete domain formulations are essential in the theoretical proofs of the general discretisation construction below.
See \citet{lindgren_spde_2022} for further discussion on the relative merits of
different basis choices.

The above considerations largely apply to the temporal basis function choice as well,
with a few useful differences. First,  in addition to piecewise linear basis functions, B-spline basis functions
of higher order can readily be applied, and in particular second order B-splines
(piecewise quadratic basis functions) provide immediate benefits with only minimal extra effort.
Where piecewise linear basis functions require some form of mass lumping for operator order 2,
second order B-splines can be applied with least squares finite element projection,
and the resulting discretised Laplacian operator matrix has the same non-sparsity as for piecewise linear basis functions. In addition, when applied to order $1$ operators, temporal interpolation
in the finite dimensional representation exhibits less quasi-deterministic fluctuations
than for piecewise linear basis functions.
Second, Harmonic basis functions are useful for smooth cyclic processes, e.g.\ seasonal effects,
but otherwise suffer from the same issues as in space.

\subsection{Precision matrix construction}

In this section we
represent the stochastic processes DEMF($\alpha_t$,$\alpha_s$,$\alpha_e$)
using general Kronecker basis Hilbert space representations.
Define $u(s,t)$ on $\cD \times \RR$,
for some polygonal domain
${\Omega \subset \RR^d}$, as the solution to \eqref{eq:spde_general}
with some boundary conditions on $\partial\cD$.
The particular choice of boundary conditions does not matter much in what follows
as long as they lead to a well defined precision operator for the solutions of
the equation posed on the bounded domain. However, in most practical situations one would use
homogeneous Neumann boundary conditions on the spatial domain.
For implementations, we restrict the temporal domain to an interval, and we then also need to impose boundary temporal boundary conditions. However, temporal boundary effects can be handled by direct calculations for the resulting AR(2)
dependence structure for the temporal coefficients in the approximation; see Appendix~\ref{sec:temporal-disc}.

The projection of the solutions onto the finite Hilbert space result
in a discretised model where the coefficients $u_{ij}$ in \eqref{eq:kronecker-expansion}
have a precision matrix that is expressed as a sum of kronecker products.
As in \citet{lindgren2011explicit}, the approximation properties of the discretisation
is directly linked to the expressiveness of the finite dimensional Hilbert space
spanned by the kronecker basis $\{\psi_i(\s)\phi_j(t),\,i=1,\dots,n_s,\,j=1,\dots,n_t\}$.

We provide the following theorem that links the continuous domain DEMF models to
finite dimensional Hilbert space representations. The theorem focuses on
the link between the continuous domain precision operator and the precision matrix,
which necessarily assuming unique solutions with a unique covariance function.
This in principle makes it applicable to more esoteric models involving various forms of intrinsic stationarity, i.e.\ non-stationary models with stationary properties with respect to some contrast filters. However, the details of such models is beyond the scope of the presentation.

\begin{theorem}\label{thm:spacetime-construction}
Let $\alpha_t\in\mathbb{N}$ and consider the equation
\begin{align}
        \label{eq:theorem-general-spde}
        \left(-\gamma_t^2\frac{\partial^2}{\partial t^2} + L_s^{\alpha_s}
        \right)^{\alpha_t/2} u(\s,t) &= \dEE[\gamma_e^2L_s^{\alpha_e}](\s,t) \quad \text{on } \cD \times [T_0,T_1],
\end{align}
where $[T_0,T_1]\subset\RR$ is a bounded interval, $L_s$ is some spatial differential operator, and some boundary conditions on $\partial \cD$ and at $T_0$ and $T_1$ are assumed such that the
precision operator for the solutions of \eqref{eq:theorem-general-spde} is well defined.
Let $\{\psi_i(\s),i=1,\dots,n_s\}$ and $\{\phi_j(t),j=1,\dots,n_t\}$ be bases for finite dimensional Hilbert spaces on $\cD$ and  $[T_0,T_1]$, respectively, chosen such that the product basis set $\{\psi_i(\s)\phi_j(t),\,i=1,\dots,n_s,\,j=1,\dots,n_t\}$ form a basis for a finite dimensional Hilbert space $V_h \subset V$, and let
$u(\s,t)=\sum_{i,j} \psi_i(\s)\phi_j(t) u_{i,j} \in V_h$ be a finite dimensional representation of a solution to \eqref{eq:theorem-general-spde}.
Assume the following two conditions:
\begin{enumerate}
\item[(i)]
Let $v(t)=\sum_{j=1}^{n_t} \phi_j(t) v_j$ be a finite dimensional approximation of a solution to
\begin{align*}
b^{1/2}\left(-\frac{\partial^2}{\partial t^2} + \kappa^2\right)^{\alpha_t/2} v(t) = \W(t), \quad \text{on } [T_0,T_1],
\end{align*}
for some $b>0$, $\kappa>0$, and $\alpha_t=1,2,\dots$, and the boundary conditions at $T_0$ and $T_1$.
Assume that the precision matrix for the weights vector $\mv{v}=(v_1,\dots,v_{n_t})$
takes the form
\begin{align*}
& b\sum_{k=0}^{2\alpha_t} \kappa^{2\alpha_t-k} \mb{J}_{\alpha_t,k/2}
\end{align*}
for some symmetric matrices $\mv{J}_{\alpha_t,0}$, $\mv{J}_{\alpha_t,1/2}$, to
$\mv{J}_{\alpha_t,\alpha_t}$.
\item[(ii)]
Let $w(\s)=\sum_{i=1}^{n_s} \psi_i(\s) w_i$ be a finite dimensional approximation of a solution to
\begin{align*}
L_s^{a/2} w(\s) &= \W(\s) \quad \text{on } \cD,
\end{align*}
where $L_s$ is equipped with the boundary conditions on $\partial\cD$, for some $a\geq 0$.
Assume that the precision matrix for $\mv{w}=(w_1,\dots,w_{n_s})$ is
$\mb{K}_a = \mv{C}^{1/2}\left(\mv{C}^{-1/2}\mv{K}_1\mv{C}^{-1/2}\right)^a \mv{C}^{1/2}$
for some symmetric positive definite matrix $\mv{K}_1$.
\end{enumerate}

Assume additionally that the temporal precision construction in condition (i) is valid for all $\kappa\geq\lambda_0^{\alpha_s/2}/\gamma_t$, where $\lambda_0$ is the smallest eigenvalue in the generalised eigenvalue problem
$\mv{K}_1\mv{e}=\mv{C}\mv{e}\lambda$.
Then, the precision matrix for the collected coefficient vector $\mv{u}=(u_{1,1},u_{2,1},\dots)$ is given by
\begin{align*}
\mb{Q}_{\mb{u}} &=
\gamma_e^2
\sum_{k=0}^{2\alpha_t}
\gamma_t^k
\mb{J}_{\alpha_t,k/2} \otimes \mb{K}_{\alpha_s(\alpha_t-k/2)+\alpha_e} .
\end{align*}
\end{theorem}
\begin{proof}
The result follows from discretising the spatial dimension, diagonalising the
resulting operator matrices, and applying the temporal precision structure condition
to the resulting independent temporal equations.
A detailed proof is given in Appendix~\ref{app:spacetime-construction}.
\qqed
\end{proof}

The existence of finite dimensional representations fulfilling conditions (i) and (ii) for certain choices of basis functions follows directly from the general constructions in \citet{lindgren2011explicit}.

For the regular Whittle-Mat\'ern operator $L_s=\gamma_s^2-\Delta$ on $\cD$ we have $\mv{K}_1=\gamma_s^2\mv{C}+\mv{G}$. For triangulated domains with local piecewise linear basis functions with $\sum_{i=1}^{n_s}\psi_i(\s)\equiv 1$ on $\cD$, we can take $\mv{C}$ to be the diagonal mass lumped mass matrix with $C_{i,i}=\<\psi_i,1\>$ and symmetric sparse structure matrix $\mv{G}$ with $G_{i,j}=\<\nabla\psi_i,\nabla\psi_j\>$. For domains where the orthogonal harmonic eigenfunctions of $\Delta$ are available, such as rectangular subdomains of $\RR^d$ and spherical harmonics on $\sphere^2$, the full mass and structure matrices $\mv{C}$ and $\mv{G}$ are both diagonal, with $C_{i,i}=\<\psi_i,\psi_i\>$.

In the temporal case, the same technique applies, but higher order B-spline basis functions are more easily applied, allowing, e.g., 2nd order B-splines to be used without mass lumping.
For temporal Neumann boundary conditions, $\mb{J}_{\alpha_t,k/2}=\mv{0}$ for odd $k=1,3,\dots,2\alpha_t-1$
and $[\mv{J}_{\alpha_t,k/2}]_{i,j}=\<(-\Delta)^{k/4}\phi_i,(-\Delta)^{k/4}\phi_j\>$ (or non-conformal approximations for non-smooth basis functions) for
even $k=0,2,\dots,2\alpha_t$.
Lemma~\ref{lemma:ar2boundary} in Appendix~\ref{sec:temporal-disc}
can be used
for 1st and 2nd order B-spline basis functions for $\alpha_t=1$ and $2$
to provide approximate stationary boundary conditions by modifying
the $\mb{J}_{\alpha_t,k/2}$ matrices for $k=0,1,\dots,2\alpha_t$.
When such temporal boundary corrections are used, fractional orders appear in $\mv{K}_{\alpha_s(\alpha_t-k/2)+\alpha_e}$ for odd $k$ unless $\alpha_s$ is an even integer. For the spatial piecewise linear finite element constructions, this would break sparsity, but for orthogonal harmonic function representations, $\mv{K}_a$ is diagonal for all $a\geq 0$, allowing the fractional powers to be used without loss of the diagonal property.

In the proof of Theorem~\ref{thm:spacetime-construction}, we see that it is sufficient that
the initial temporal precision structure is valid for
$\kappa \geq \kappa_0=\gamma_s^{\alpha_s/2}/\gamma_t$. By taking a Taylor expansion
for the boundary precision elements with respect to $\kappa$ and $\kappa_0$,
the approximation would be improved, compared with taking the Taylor expansion at $\kappa=0$,
as the expansion would be closer to the exact expression for a wider range of relevant temporal frequencies.
This improvement would however come at the expense of making the matrix constructions dependent on the $\gamma_s$ and $\gamma_t$ parameters directly.

\section{Applications}
\label{sec:applic}

\subsection{Separable vs non-separable forecasting}

The difference between using separable and the non-separable models is
most clearly seen when doing forecasting. To illustrate this, we
simulated spatial data for time $t=0$, and compute the posterior
conditional expectation for
$t=0$, $1$, and $2$ and $t=2$. For the simulation, we used a Mat\'ern
model with spatial smoothness $\nu_s=1$, matching models $\Ma$, $\Mb$, and $\Mc$,
and add one percent (standard deviation) nugget effect.
The parameters were set to $r_s=4.0$, $r_t=2.5$ for the separable models
and $r_t=4.5$ for the non-separable models, and $\sigma=1$.
The scaling difference for $r_t$ in the non-separable models
compensates for the difference in parameter interpretation
illustrated in Section~\ref{sec:model-examples}.
In the estimation, the nugget precision and the temporal range parameters $r_t$ were
kept fixed, so that only the marginal standard deviation $\sigma$ and the
spatial range parameter $r_t$ were estimated for each model.

\begin{figure}[t]
    \begin{center}
        \includegraphics[width=0.75\linewidth]{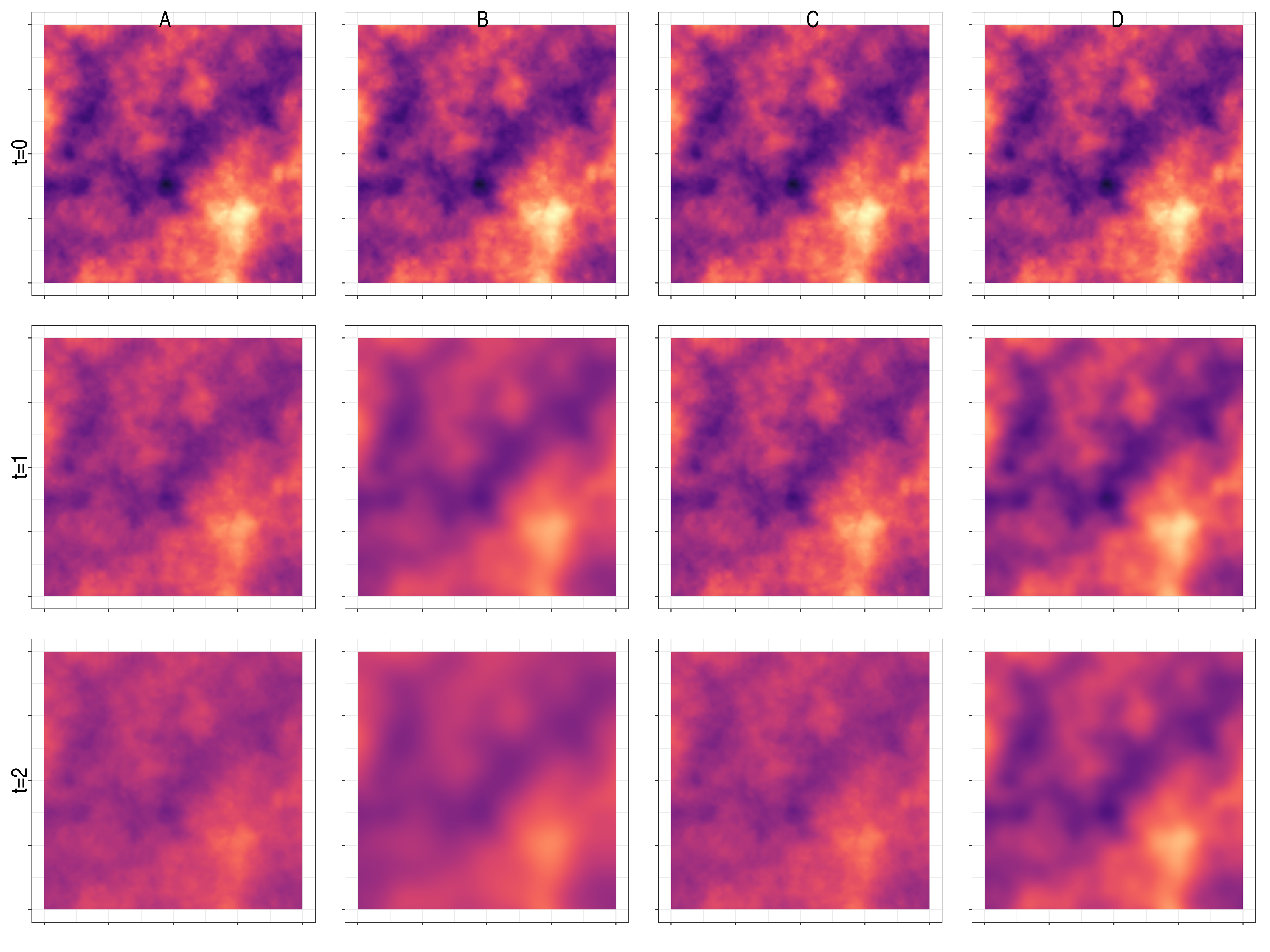}
    \end{center}
\vspace{-0.5cm}
        \caption{Predictions from each model
                when conditioned on a spatially dense
                dataset at $t=0$, and no observations for $t=1$ or $t=2$.}
        \label{fig:forecasts}
\end{figure}

Figure~\ref{fig:forecasts} displays the predictions from the four
models in Table~\ref{tab2}. For $t=0$, the results are similar for the four models, due
to the highly informative data. For the predictions for $t=1$ and $t=2$, we see
how the separable model $\Ma$ and $\Mc$ only reduce the fields point-wise towards zero,
and that non-separable models exhibit spatial diffusion, as expected. This
behaviour was part of the theoretical motivation of \cite{whittle1954,
    whittle1963stochastic}, and also a major motivation for developing the
DEMF family.
It's also noteworthy that since the forecasts are conditional expectations based
on a finite set of observations, they are smoother than the process realisations.
For the separable models, this effect isn't visible, since there this effect
only appears on smaller spatial scales than shown, but it is clearly visible
for the non-separable model. In all four cases, the posterior process realisations
however have their ordinary, lower, smoothness. This is important to take into
account when considering probabilistic forecasts, in particular for prediction
of non-linear functionals of the process.

\subsection{Global temperature dataset}
\label{sec:globaltemp}

This section presents some results analysing daily temperature data,
where all the code for the data cleaning, model fitting and plots are
included in the supplementary material.

\subsubsection{Data and model structure}

\begin{figure}[t]
        \begin{center}
            \includegraphics[width=0.49\linewidth]{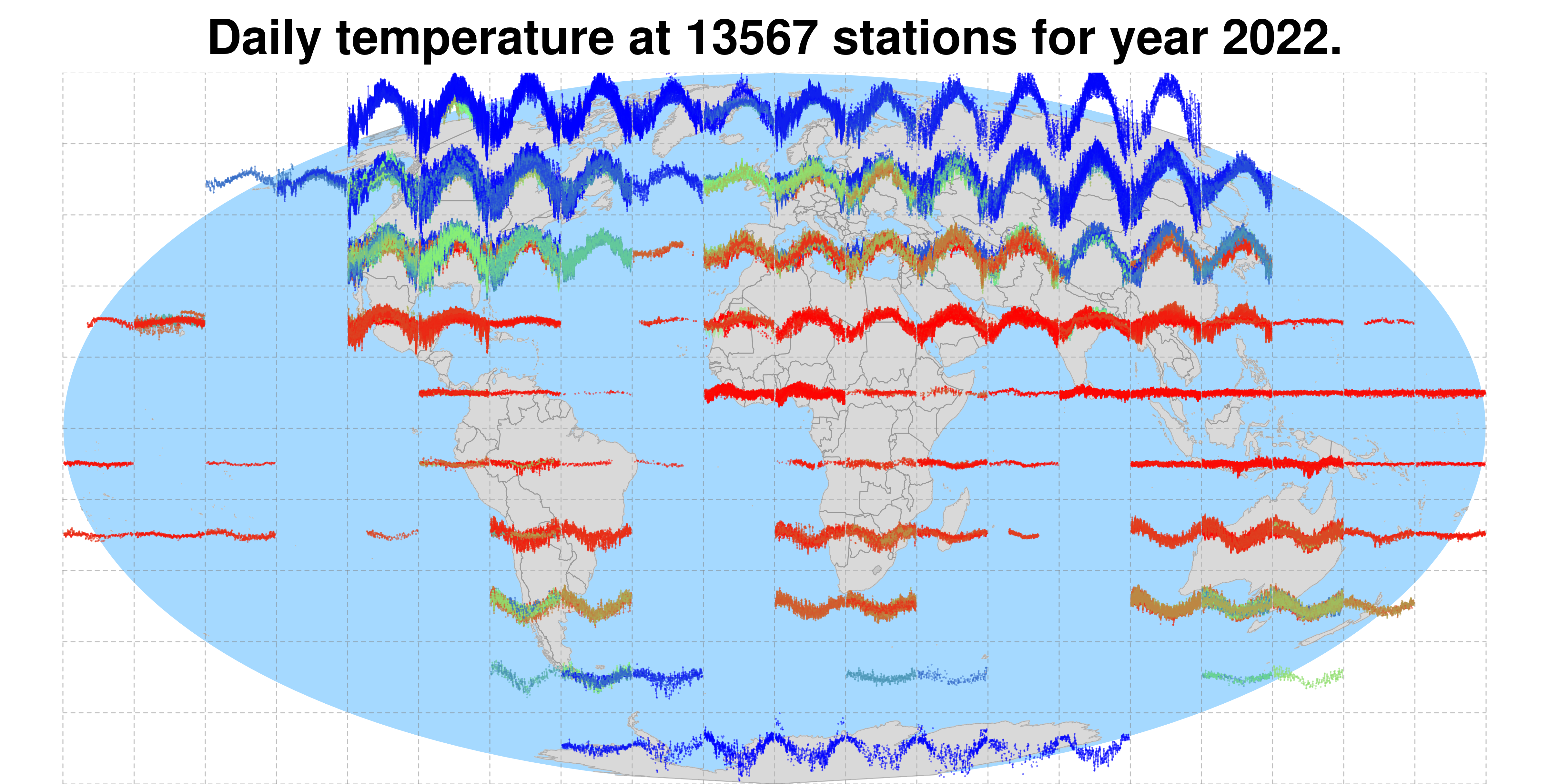}
            \includegraphics[width=0.49\linewidth]{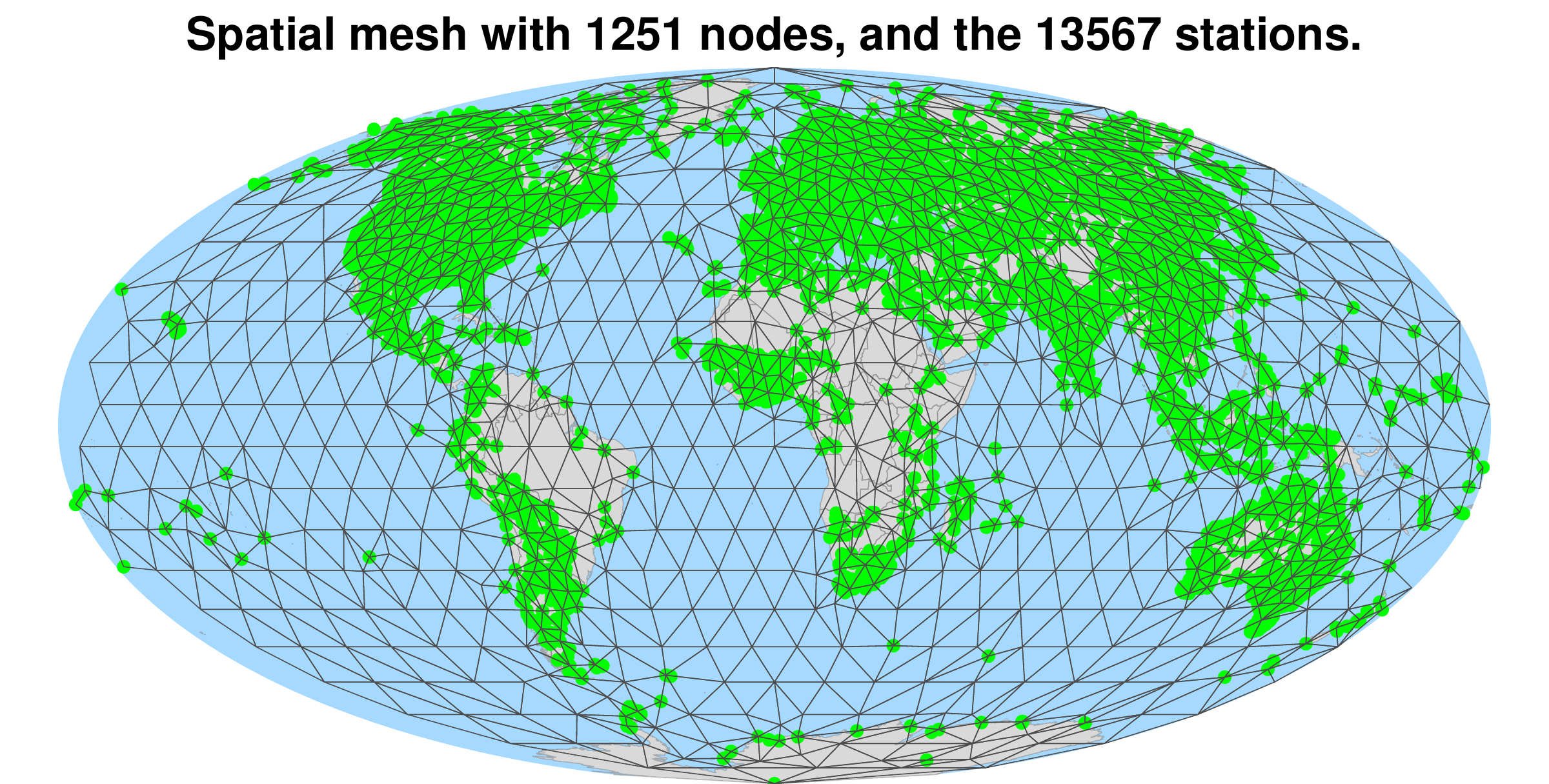}
        \end{center}
        \caption{Daily average temperature time series shown grouped
            near the corresponding locations (left), with colours based on the year average, for each station, from blue (cold), to red (warm). The locations (green) and the mesh used for the spatio-temporal model components $v$ and $u$ (right).
            The figures use a Mollweide projection,
            but the computational meshes are defined by spherical triangles directly on the globe surface.}
        \label{fig:gtdata}
\end{figure}

We used daily data for year 2022, using minimum (TMIN) and
and maximum (TMAX) daily temperatures, as described in \cite{ghcn2012data}.
We cleaned the data for inconsistencies before the analysis.
In particular, values beyond $7$ standard deviations
from the mean were treated as missing.
We computed the mean of these two variables
for each day at each one of $13\,567$ stations world-wide,
a total of $4\,951\,955$ data entries.
Figure~\ref{fig:gtdata} (top) shows this data as time series
grouped by location.

The model includes an overall level, $\mu$, the elevation in kilometres,
a smoothed deviation from the overall mean jointly over latitude and time, $b(\s,t)$,
a spatio-temporal random field $v(\s,t)$ varying slowly in time,
and a spatio-temporal random field, $u(\s,t)$, capturing the daily variability.
The $b(\s,t)$ function is allowed to vary by latitude and time,
but is fixed to zero at the equator.
The linear predictor expression is
\begin{align}
  \eta(\s,t) &= \mu + \alpha E(s) + b(\s,t) + v(\s,t) + u(\s,t).
  \label{eq:temp-eta}
\end{align}
Each observation $y_{i}$ is modelled with additive Gaussian noise with
a common variance parameter, $\sigma^2_e$, so that $y_{i}=\eta(\s_{i}, t_{i}) + e_{i}$,
where $(\s_{i},t_{i})$ is observation $i$, $i=1,\dots n$,
and $e_{i}\sim\pN(0,\sigma^2_e)$.

\subsubsection{Model discretisation and estimation}
For the $b(\s,t)$ and $v(\s,t)$ functions in the predictor expression \eqref{eq:temp-eta},
we defined temporal basis functions
$1$, $\cos[(t-1) \cdot 2\pi/ 365]$, and $\sin[(t-1) \cdot 2\pi/ 365]$.
For $b(\s,t)$, these were multiplied with two quadratic basis function
in $\sin(\text{latitude}\cdot\pi/180)$, which guarantees smooth behaviour with respect
to the location, $\s$, at the two poles, giving a total of six basis functions.
For $v(\s,t)$, each of the three temporal basis functions were instead
multiplied by stationary spatial Whittle-Mat\'ern fields over the sphere
forming a model term that captures the seasonal local deviation
from the basis seasonal pattern described by $b(\s,t)$.

The reported results were estimated using a spatial mesh with $1\,251$ nodes
(median node distance $\sim 587$km), shown in Figure~\ref{fig:gtdata},
both for the spatial coefficients in $v$ and for $u$.
For $u$, we discretised the time domain with first order basis functions
with one knot per day.
This setting gives a spatio-temporal model for $u$ of size of $456\,615$.
In vector form we have
\begin{align*}
  \mb{y} = \bm{1}\mu + \bm{E}\alpha + \mb{B}\mb{b} + \bm{A}_v\mb{v} + \bm{A}_u\mb{u} + \mb{e},
\end{align*}
where $\mb{B}$ is a six-column matrix of the evaluated basis functions for $b(\s,t)$
at the observation locations and times, and $\bm{A}_v$ and $\bm{A}_u$
contains the evaluated basis functions, respectively, for $v(\s,t)$ and $u(\s,t)$.
The vectors $\mb{b}$, $\mb{v}$, and $\bm{u}$
contain the corresponding basis weights.

We used independent priors for all the model parameters.
We used a flat prior for $\mu$ and a Gaussian with
mean zero and variance 100 for $\alpha$ and each element in $\mb{b}$.
The three spatial fields in $\mb{v}$ are assumed as independent realizations
each one modelled using Eq.~\ref{eq:simpleSpde} with a common spatial range $r_v$,
and common marginal variance $\sigma_v^2$.
The $u(\s,t)$ term is a spatio-temporal field using
one of the four models in Table~\ref{tab2}.
In total, we have six variance/range parameters to estimate.
We used penalized complexity priors for all these parameters~\citep{art631,art590},
applied to the marginal properties of the models.
To define the PC-prior for $\sigma_e$ we used
$\pP(\sigma_e \geq 5) = 0.01$ and the same for $\sigma_v$ and $\sigma$.
We used $\pP(r_v \leq 600\text{km})=0.01$ for
$r_v$ and $r_s$, and for $r_t$ we used
$\pP(r_t \leq 1\text{ days})=0.01$ in models $A$ and $C$
and $\pP(r_t \leq 2\text{ days})=0.01$ in models $B$ and $D$.

\subsubsection{Model fitting results}

Attributing the relative contributions to each model component is non-trivial
due to the posterior correlation between the components. However, a basic
linear model variance decomposition,
SQT = $\sum_i(y_i -\overline{y})^2$ and SQR = $\sum_i(y_i - \pE(\eta_i\mid \mv{y}))^2$,
can be otained to define $R^2 = 1 - \text{SQR/SQT}$.
We have that the predictor model $\eta(\s,t)$ captures 97.18\% of the variability with model $\Mb$.
Table~\ref{tab1gt} reports DIC, WAIC, and goodness-of-fit statistics for within-sample and leave-one-out
assessment \citep[leave-one-out log predictive density score, LCPO, see][]{Heldetal2010cpo},
for each of the five fitted models. For within-sample assessment, $R^2$,
mean squared error (MSE), and mean absolute error (MAE)
assess the posterior mean and median only, whereas the
log predictive density score (LPO), CRPS, and SCRPS assess the full predictive distribution
\citep{gneiting2005,bolinWallin2022scrps}.
The model $\Mo$ includes the fixed effects and $v(\s,t)$,
whereas models $\Ma$, $\Mb$, $\Mc$ and $\Md$ all include
$\mb{u}$, using the four models in Table~\ref{tab2}.
When considering $R^2$, LPO, MSE, MAE, CRPS and SCRPS
model $\Mb$ performed a slightly better.
When considering DIC, WAIC and LCPO
model $\Mc$ was slightly better.

\begin{table}
    \caption{
        Summary statistics for each estimated model.
        The LPO is the average negated log-predictive density and
        the LCPO is its leave-one-out predictions.
        The MAE, MSE, CRPS and SCRPS scores are
        in-sample statistics based on a Gaussian approximation
        of the posterior predictive distribution
        for each data point.}
        \label{tab1gt}
    \centering
\begin{tabular}{rrrrrrrrrr}
  \hline
Model & $R^2$ & DIC & WAIC & LPO & LCPO & MSE & MAE & CRPS & SCRPS \\
  \hline
$\Mo$ & 0.8663 & 5.8094 & 5.8091 & 2.9042 & 2.9046 & 19.4963 & 3.3221 & 2.4277 & 1.7903 \\
  $\Ma$ & 0.9718 & 4.3214 & 4.3206 & 2.1330 & 2.1573 & 4.1138 & 1.4660 & 1.0944 & 1.3946 \\
  $\Mb$ & \textbf{0.9718} & 4.3216 & 4.3215 & \textbf{2.1329} & 2.1575 & \textbf{4.1134} & \textbf{1.4656} & \textbf{1.0941} & \textbf{1.3945} \\
  $\Mc$ & 0.9718 & \textbf{4.3209} & \textbf{4.3192} & 2.1334 & \textbf{2.1571} & 4.1187 & 1.4675 & 1.0951 & 1.3949 \\
  $\Md$ & 0.9718 & 4.3217 & 4.3214 & 2.1331 & 2.1576 & 4.1151 & 1.4659 & 1.0944 & 1.3946 \\
   \hline
\end{tabular}
\end{table}

\begin{figure}[t]
        \begin{center}
            \includegraphics[width=0.99\linewidth]{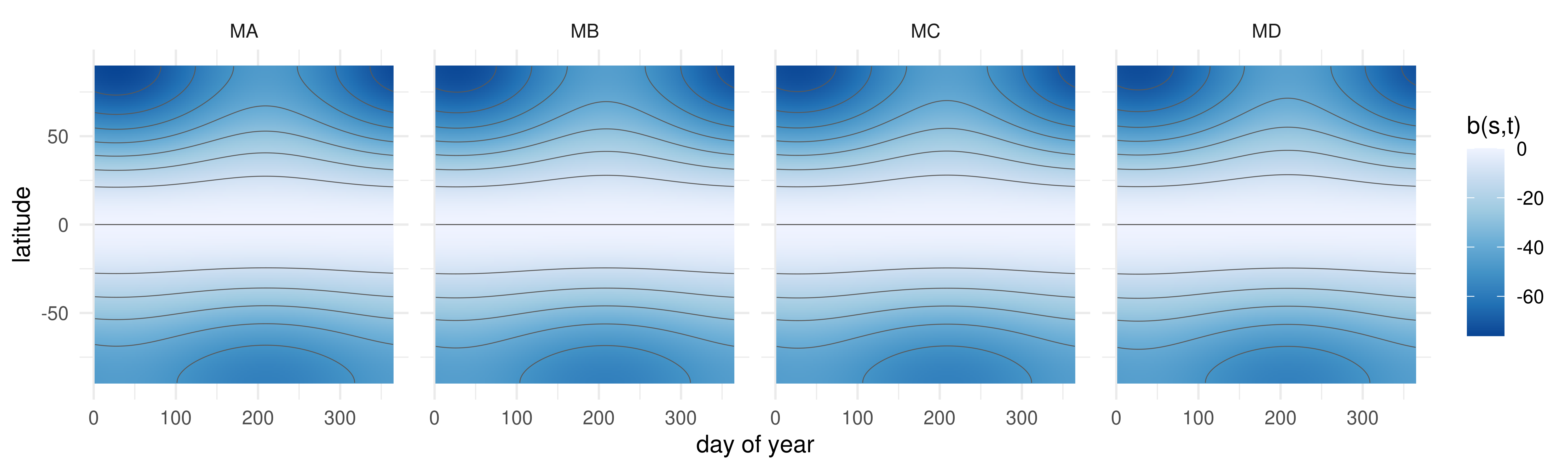}
        \end{center}
\vspace{-0.5cm}
        \caption{
            The posterior mean of the smoothed seasonal latitude effect $b(\s,t)$ for each model.
        }
        \label{figseasonallatitude}
    \end{figure}

\begin{table}
    \caption{
    The posterior mean and standard deviation (in brackets) for each of the model
    hyper-parameters.
        \label{tabhysummar}}
    \centering
\begin{tabular}{lrrrrrr}
  \hline
 & $\sigma_e$ & $r_v$ & $\sigma_v$ & $r_s$ & $r_t$ & $\sigma$ \\
  \hline
$\Ma$ & 2.07 (0.001) & 2363.55 (60.114) & 3.70 (0.082) & 1288.29 (4.250) & 5.69 (0.033) & 2.74 (0.005) \\
  $\Mb$ & 2.06 (0.001) & 2382.76 (50.235) & 3.42 (0.074) & 2244.12 (21.266) & 50.37 (1.096) & 3.91 (0.032) \\
  $\Mc$ & 2.07 (0.001) & 2365.50 (44.463) & 3.50 (0.064) & 1342.36 (4.859) & 3.87 (0.011) & 2.61 (0.005) \\
  $\Md$ & 2.06 (0.001) & 2377.28 (42.434) & 3.41 (0.062) & 1387.39 (4.302) & 7.19 (0.038) & 2.86 (0.010) \\
   \hline
\end{tabular}
\end{table}

\begin{figure}[t]
        \begin{center}
                \includegraphics[width=0.99\linewidth]{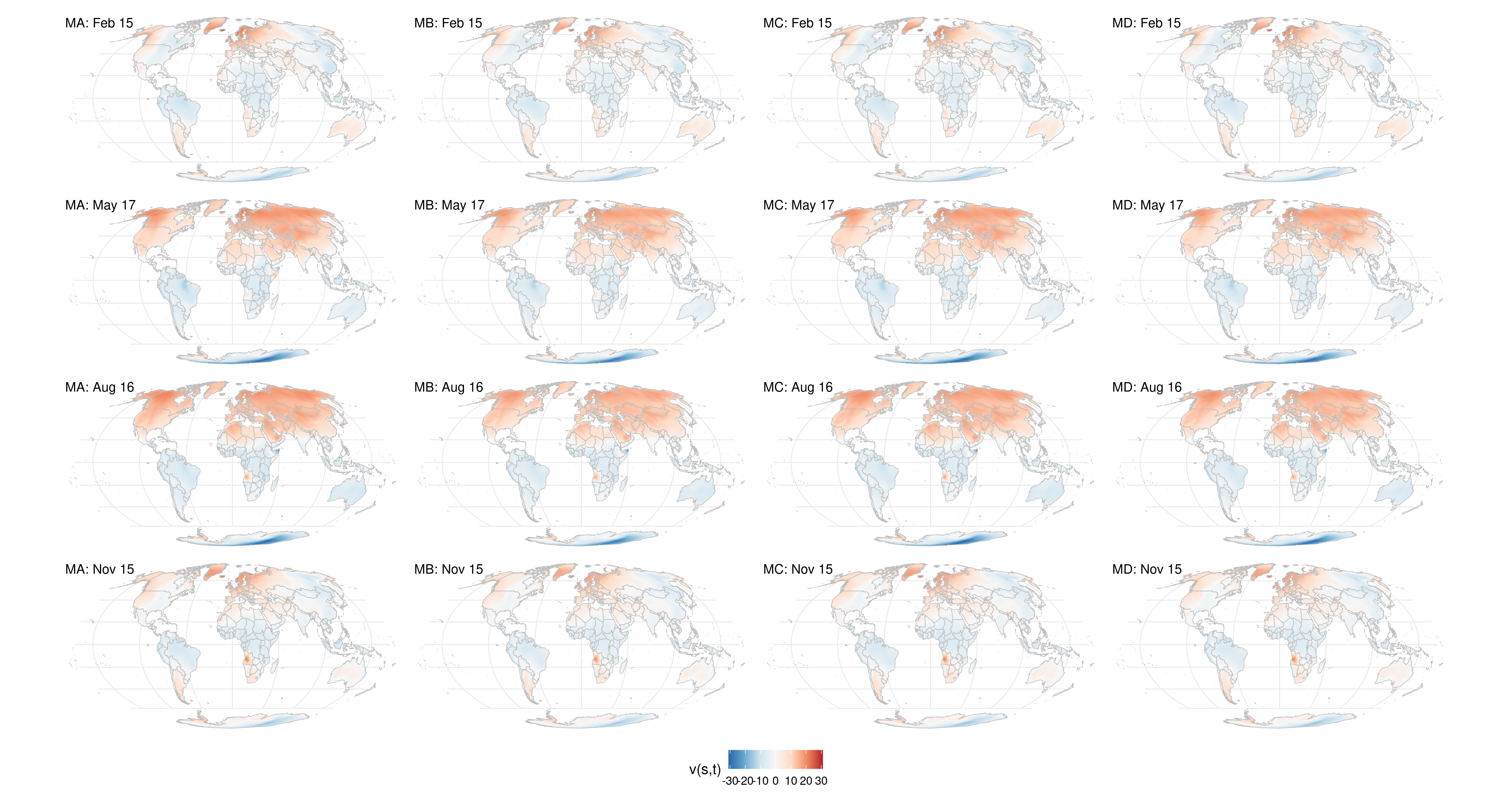}
        \end{center}
        \vspace{-0.5cm}
        \caption{
            The posterior mean for $v(\s,t)$, at some time points, for each model.
	    From left to right, models $\Ma$, $\Mb$, $\Mc$, and $\Md$.
            From top to bottom, the time points are 46, 137, 228, and 319,
            corresponding to the day of year in 2022 as labelled at the top right of each plot.
            }
        \label{figv4fields}
\end{figure}

\begin{figure}[t]
        \begin{center}
                \includegraphics[width=0.99\linewidth]{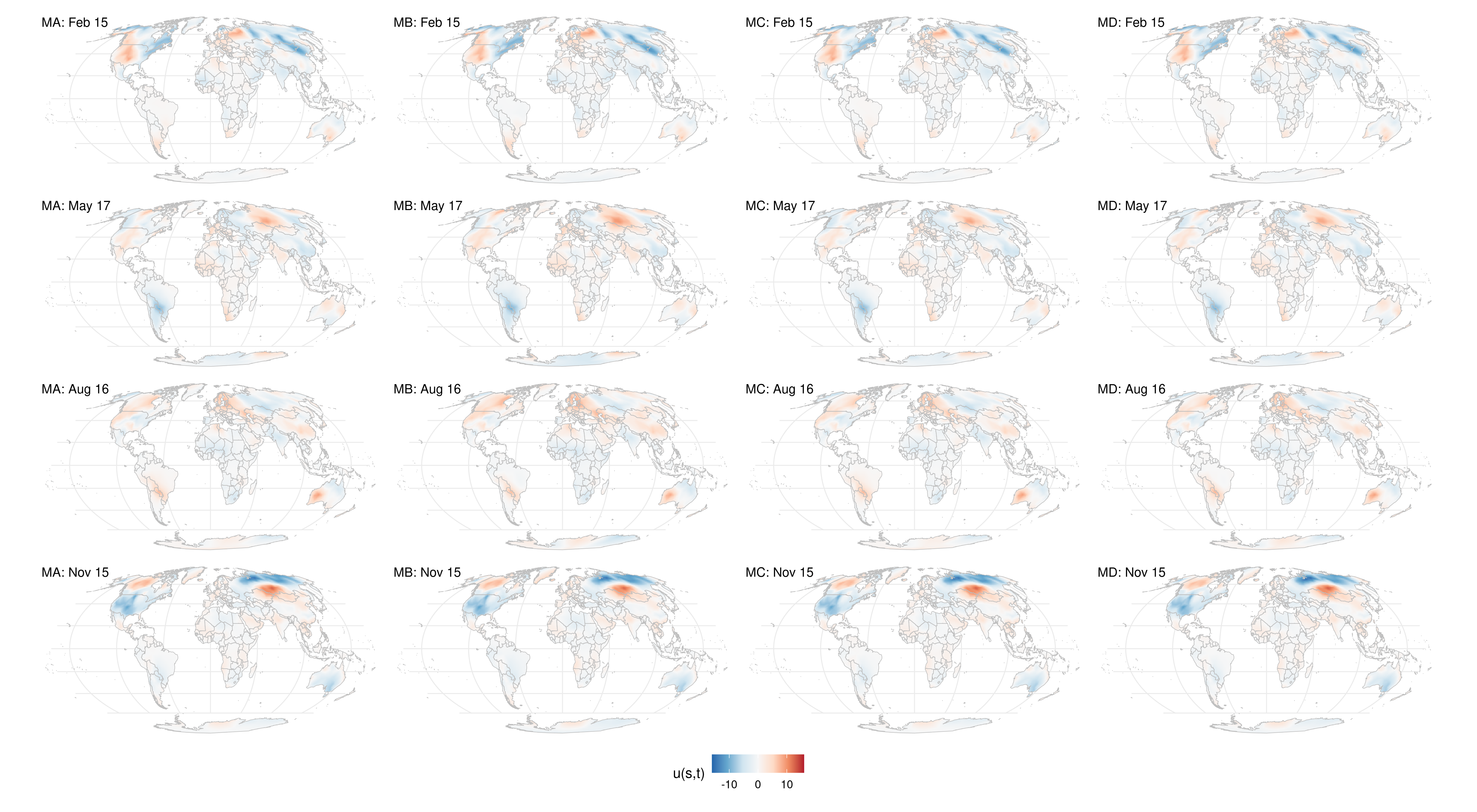}
        \end{center}
            \vspace{-0.5cm}
        \caption{
            The posterior mean for $u(\s,t)$, at some time points, for each model.
	    From left to right, models $\Ma$, $\Mb$, $\Mc$, and $\Md$.
            From top to bottom, the time points are 46, 137, 228, and 319,
            corresponding to the day of year in 2022 as labelled at the top right of each plot.
            }
        \label{figst4fields}
\end{figure}

For $\Mb$ (using model~$B$ for $u(\s,t)$), the posterior mean for
$\mu$ is $34.74$ and for $\alpha$ is $-4.70$.
The posterior mean of $b(\s,t)$ for all four models are shown in
Figure~\ref{figseasonallatitude}, which displays the temperature over time and latitude.
The seasonal pattern is clear, with summer and winter temperatures in the two hemispheres
standing out, with in particular lower temperatures (blue) in each hemisphere's respective
winter.
The model range/variance parameter estimates are summarised in Table~\ref{tabhysummar}.
For the posterior mean, we have $\pE(r_s\mid\mv{y})=2244.12$~km and
$\pE(r_t\mid\mv{y}) = 50.37$~days for model $\Mb$,
and smaller for the other models.
These values can be interpreted through Figure~\ref{fig:covariancesA}.
The posterior mean for the spatio-temporal field $v(\s,t)$ for some days
in 2022 is shown in Figure~\ref{figv4fields}.
This term captures temporal slowly varying spatial variation
from the overall mean, elevation effect and the basic seasonal latitude parts of the model.
The posterior mean for the spatio-temporal field $u(\s,t)$ for some days
in 2022 is shown in Figure~\ref{figst4fields}. This term captures the remaining
spatio-temporal variation of the temperature field around the other parts of the model.

\subsubsection{Forecast evaluation}
\label{sec:forecast-evaluation}
As was already apparent from the diagnostic scores in Table~\ref{tab1gt},
despite the temporal range parameters being different for the four models, particularly form model $\Mb$,
they are nearly indistinguishable with respect to direct and leave-one-out prediction distributions.
Since the space-time non-separability effect is unclear in the leave-one-out
setting, we extend the assessment by computing multi-horizon temporal predictions.
We used the first 14 days of the data from each month to predict the following 7 days.
These forecasts were done while keeping the covariance parameters and the long term spatio-temporal components $b(\s,t)$ and $v(\s,t)$ fixed to their posterior modes from the full joint model estimates, so that only
the short-term spatio-temporal field $u(\s,t)$ was reestimated for each scenario.
This generated forecasts for each model for 12 different weather and seasonal conditions over the year.

\begin{figure}[t]
        \begin{center}
                \includegraphics[width=0.89\linewidth]{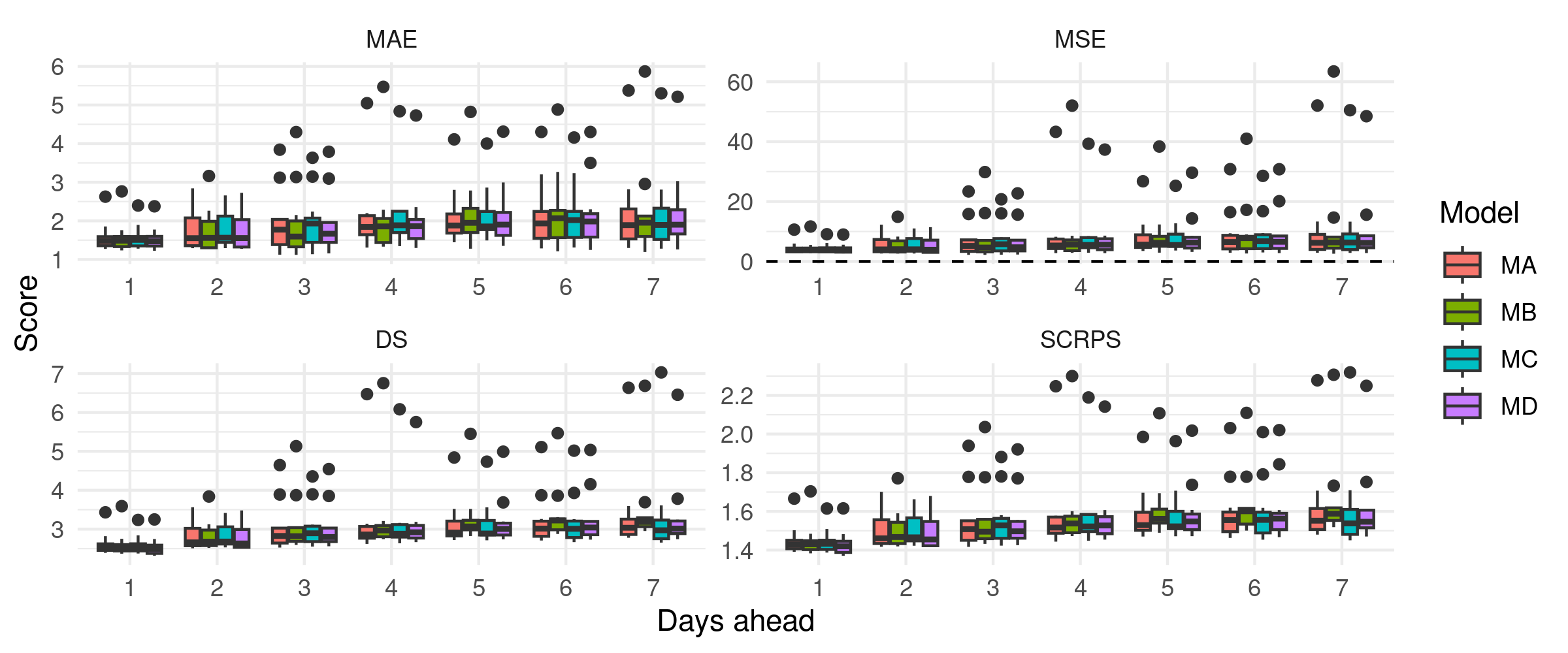}
                \includegraphics[width=0.89\linewidth]{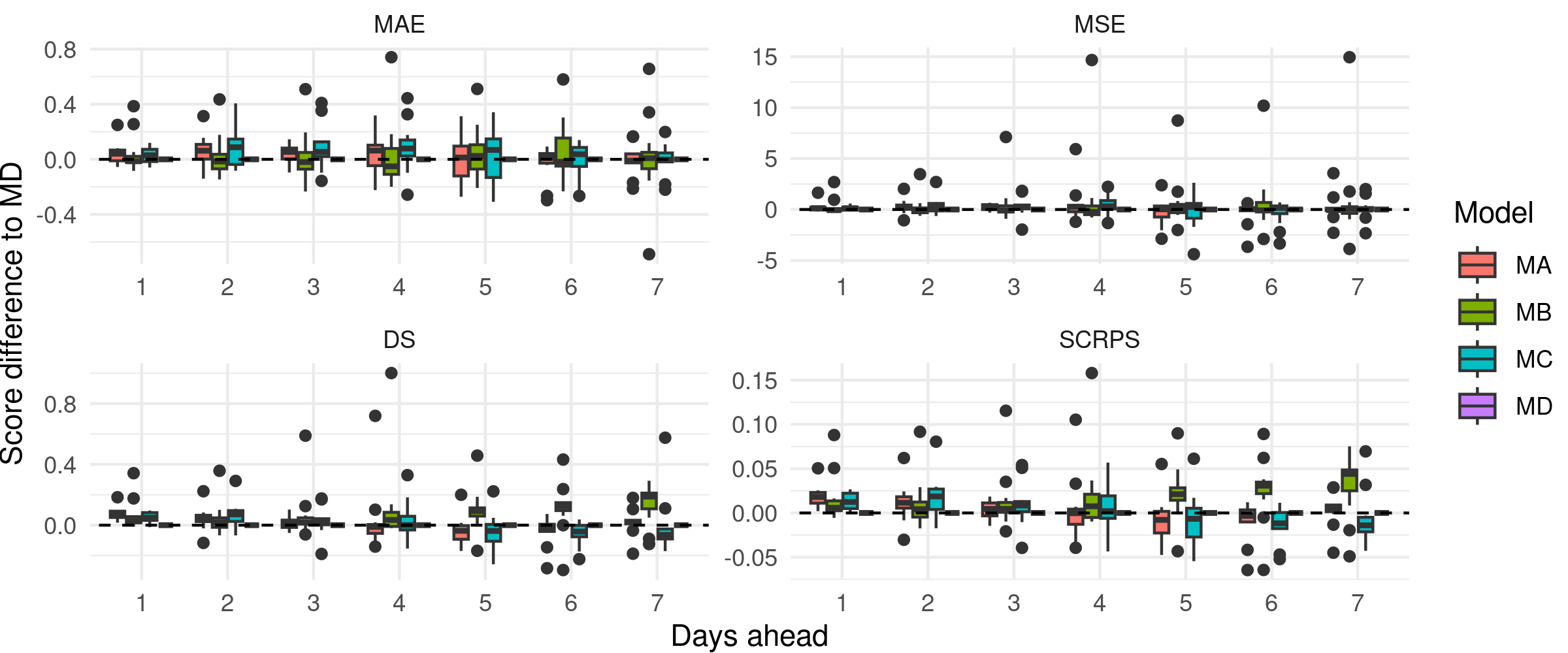}
        \end{center}
            \vspace{-0.5cm}
        \caption{
            Top: Multi-horizon (1--7 days) and multi scenario (one for each month of year 2022)
            forecast scores for predicting one week ahead. Lower scores indicate a better forecast.
            Bottom: The differences in scores compared with model $\Md$.
            }
        \label{fig:multi-horizon}
\end{figure}

Figure~\ref{fig:multi-horizon}(top) shows the mean absolute error (MAE), mean squared error (MSE),
mean Dawid-Sebastiani \citep[DS, equivalent to log-score for Gaussian predictions, see][]{gneiting2005}, and mean SCRPS summarized for each prediction horizon (1--7 days) for each of the 12 scenarios.
Figure~\ref{fig:multi-horizon}(bottom) shows the difference between the scores for each model to those of model $\Md$, to more clearly highlight the differences between the models.
The prediction errors all exhibit increasing variability for longer forecast horizons, as well as a generally increasing trend, that mostly levels off around 6 days, which is compatible with the estimated temporal correlation length parameter $r_t$ for models $\Ma$, $\Mc$, and $\Md$.
For 1-day ahead forecasts, model $\Md$ achieved the lowest scores, and it appears
more stable than the other models for longer forecast horizons. Model $\Mb$
has large score variability, and is
doing worse than the other three models for long forecast horizons, in particular for the scores that take forecast uncertainty into account.
For more details see
Appendix~\ref{sec:app-details}, where one can see that the scores are generally worse
in the start and end of the year, indicating an unmodelled aspect of seasonality,
e.g.\ in weather variability.

\section{Discussion}\label{sec:discussion}

We have developed a spatio-temporal extension of
the Gaussian Mat\' ern fields based on
a fractional and stochastic version of
the physical diffusion equation considered
by \citet{whittle1954,whittle1963stochastic}.
We named the new family the Diffusion-based
Extension of the Mat\'ern Field (DEMF),
and showed that it
has several useful properties:
The spatial marginals are Gaussian Mat\'ern fields;
the family contains
Markovian diffusion processes with clear physical interpretations; and
we can control the smoothness in space and in time,
the degree of non-separability,
and interpret all the parameters.
The family can also be extended to
non-stationary models and be defined on curved manifolds.

The DEMF family contains several important subfamilies;
1) Separable models, 2) Markov models, 3) Partially separable models,
4) a fully non-separable subfamily of the \citet{stein2005space} family,
and 5) spatially non-stationary model dynamics.
This provides a rich outset for studying the
practical and methodological impacts these assumptions
have.

An important special case
in the DEMF family is the  DEMF(1,2,1) model, which in two-dimensional space, is the closest
stochastic process analogue to the
diffusion equation
(see \eqref{eq:whittlespacetime}),
and hence a natural default choice
for spatio-temporal model components.
The non-separable DEMF(1,2,1) model
has the same smoothness in space and in time
as the separable DEMF(1,0,2) model,
which has a covariance function that is a Kronecker product of
a Mat\'ern covariance in space and an exponential covariance in time.
Of particular interest is also the non-separable DEMF(2,2,0) model which can be viewed as an iterated diffusion model.

Although the proposed model family includes non-separable models, which
in itself might be desirable from considerations about covariance properties,
another view-point is that the non-separability here arises as a direct and natural consequence of
the physics-inspired dynamical diffusion construction.
Most importantly, the results shed light on which types of non-separability
would occur naturally under certain assumptions on the spatio-temporal dynamics
and properties of the driving noise process.
Although there are strong arguments in the literature
against using a separable model,
the space of non-separable models is vastly larger than the space of separable models.
Hence we need to consider which types of non-separable models
are more, and which are less, appropriate than the separable alternatives.
As illustrated by the practical example in Section\ref{sec:globaltemp}, it is important
to assess models in a context relevant to the intended use case. In particular,
non-separability is unlikely to make a difference for space-time interpolation,
as assessed by e.g.\ leave-one-out cross-validation, but can make a difference
in full space-time forecasting settings.

It is natural to view the model class as an example of building models via building blocks
with precision operator space-time separability.
The most basic form of separability is \emph{functional separability}, where
a spatial and temporal processes are added or multiplied, which can be viewed as having $S+T$ degrees of freedom, where
$S$ and $T$ are the spatial and temporal effective dimensions of the functions.
The next form is \emph{covariance separability}, where the model is formed from
a sum of covariances (giving the same as functional separability) or a product
of covariances, where the latter gives $S\cdot T$ degrees of freedom. These
covariance product models are covariance separable but functionally non-separable.
For precision models, plain products are equivalent to covariance separable models,
but sums of precision products give covariance non-separability.
In both the covariance and precision cases, non-stationarity in the spatial and
temporal operators can be introduced, as long as the operator separability is kept.
This distinguishes this type of non-separability from fully non-separable non-stationary models
that cannot be written as precision sums and products.  The key is to retain
commutativity between the spatial and temporal operators within each product:
$(\mb{Q}_t \otimes \mb{I}_s)(\mb{I}_t \otimes \mb{Q}_s)=(\mb{I}_t \otimes \mb{Q}_s)(\mb{Q}_t \otimes \mb{I}_s)=\mb{Q}_t \otimes \mb{Q}_s$.

With the GMRF representation presented herein,
the computational costs of the separable and non-separable models are similar, as the sparsity structure
of posterior precisions, given irregularly spaced observations in generalised latent Gaussian models,
is only marginally affected by the non-separability, and can even be more sparse
in the non-separable cases; the separable precision neighbourhood structures are space-time prisms,
whereas the non-separable neighbourhood structures are double-cones.
Together with interpretable parameters,
this makes the non-separable models
as practically accessible as the separable models.
In the supplementary materials
we provide an implementation with examples in \texttt{R-INLA}.

In this paper we mainly focused on stationary fields,
but also showed how very little in the theory and computational construction
changes for models with
curved manifolds or spatially non-stationary operators, as already discussed by
\citet{lindgren2011explicit}.
Although the initial practical implementation only covers a subset of the general model class,
we believe that the general results can and will be applied in more general contexts
in the future.

\section{Supplementary materials}

The examples were computed with the \texttt{INLAspacetime} package, using
the \texttt{cgeneric} method from the \texttt{R-INLA} software for
computationally efficiency, via the \texttt{inlabru} interface
\citet{bachl_inlabru_2019}. The code for the example can be found in
the supplementary material. See also \citet{van2019new} for a similar
example.

Code for the figures and examples is available
at \url{https://github.com/finnlindgren/spacetime-paper-code}, and the
\texttt{INLAspacetime} \textsf{R} package
(\url{https://github.com/eliaskrainski/INLAspacetime}) implements a subset of
the models.

\section{Acknowledgements}
As part of the EUSTACE project, Finn Lindgren received funding from the European
Union's Horizon 2020 Programme for Research and Innovation, under
Grant Agreement no. 640171.

\bibliography{spacetime_bib,mybib}

\appendix

\newpage
\section{Almost sure sample path continuity}
\label{sec:path-continuity}

We start by rephrasing the main
theorem of Section~9.3 of \citet{cramer1967book}, and giving a formal definition
of the smoothness index.

\begin{definition}[Cram\'er and Leadbetter, Section 2.5, generalised]
A stochastic process $x(t)$ on some domain $\cD$, is \emph{equivalent} to another process $y(t)$ on $\cD$, if for each fixed $t\in\cD$, $x(t)=y(t)$, with probability one. This means that $x$ differs from $y$ on at most a set with measure zero, and that they have the same finite dimensional distributions.
\end{definition}
This technical definition allows us to view equivalent processes as an equivalence class that encapsulates some of the finer details of probabilistic measure theory for sample path continuity of stochastic processes.

\begin{theorem}[Cram\'er and Leadbetter, Section 9.3]
\label{thm:cramer}
Let $S^*(\omega)$ be the spectral measure of a stationary Gaussian process $x(t)$ on $t\in\mathbb{R}$,
and let
\begin{align*}
I_{a,b} &= \int_0^\infty \omega^{2a} \left[\log(1+\omega)\right]^b \md S^*(\omega)
\end{align*}
for $a,b \geq 0$.
For spectral measures that admit a spectral density $S(\omega)$, replace $\mmd S^*(\omega)$ in $I_{a,b}$ with $S(\omega)\md\omega$.
\begin{enumerate}
\item If $I_{a,b}<\infty$ for some $b>3$ and some $a$ in the range $[k,k+1)$ for some $k\in\mathbb{N}$, then $x(t)$ is equivalent to a process $y(t)$ that has a
continuous sample derivative of order $k$, with probability one.
\item If $I_{a,1}<\infty$ for some $a$ in the range $(k,k+1]$ for some $k\in\mathbb{N}$, then $x(t)$ is equivalent to a process $y(t)$ whose sample derivative of order $k$ is Hölder continuous with exponent $a-k\in(0,1]$, with probability one.
\end{enumerate}
For the case $k=0$, the sample derivative of order zero refers to the sample
path of the process itself.
\end{theorem}
\begin{proof}
The results follow directly from the main theorem of Section 9.3 of
\citet{cramer1967book}.
\qqed
\end{proof}

Results from \citet{scheuerer_regularity_2010} show that under a similar condition
for $d$-dimensional domains,
\begin{align*}
\int_{\RR^d} \|\mb{\omega}\|^{2a} \left[\log(1+\|\mb{\omega}\|)\right]^b \md S^*(\mb{\omega})
&< \infty,
\end{align*}
for all $a<\nu$ and some $b>1$,
the sample paths on $\RR^d$ belong to any Sobolev space $W^{a,2}$ of order $a<\nu$,
on any bounded subdomain, with probability one.
For isotropic spectra, this translates to $I_{a,b}<\infty$ for all $a<\nu$ and some $b>1$,
when applied to the one-dimensional marginal spectra.

The integral criteria above motivate the following characterisation of the \emph{smoothness index} $\nu$,
in particular when applied to models with power law spectral density tails.
\begin{definition}
The smoothness index $\nu$ of a stationary Gaussian process $x(t)$, $t\in\mathbb{R}$,
is
$
\nu=\sup_{a} \{a; I_{a,1} < \infty\},
$
where $I_{a,1}$ is defined as in Theorem~\ref{thm:cramer}.
\end{definition}


\section{Numerical evaluation of covariances}\label{app:spec2cov}

When spatio-temporal spectral density is available in closed format on $\RR^d\times\RR$,
the covariance function can be obtained to close numerical accuracy using fast Fourier transformation (FFT).
In order to reduce the memory requirements for isotropic models on high-dimensional spatial domains, the marginal space-time spectrum along a single spatial dimension can be evaluated
first. For general models, evaluating spatial FFT transformations for each time lag further reduces the memory footprint if only some of the covariances are stored.

The idea is construct the folded spectrum resulting from spatial/temporal discretisation, and then discretise it onto a finite regular lattice. The resulting integral approximations can be evaluated with standard FFT implementations, and the numerical approximation error in the covariance evaluation is determined by the the frequency resolution and smoothness of the spectral density.  The brief theory behind the construction presented below is based on \citet{LindgrenStocProc}.

\subsection{Spectral folding}

The exact spectral representation of the covariance evaluated on a discrete infinite lattice can be derived from the continuous domain representation. For simplicity, assume the same lattice spacing $h$ in each direction. A stationary covariance function $R(\s)$ evaluated at lattice points $\mv{j} h$, $\mv{j}\in\ZZ^d$ is given by
\begin{align}
R(\s) &= \int_{\RR^d} \exp(i\mv{\omega}\cdot\s) S(\mv{\omega}) \md\mv{\omega}, \notag \\
R(\mv{j}h) &= \int_{\RR^d} \exp(i\mv{\omega}\cdot\mv{j} h) S(\mv{\omega}) \md\mv{\omega} \notag \\
 &= \int_{[-\pi/h,\pi/h)^d} \sum_{\mv{k}\in\ZZ^d} \exp(i(\mv{\omega} + 2\pi \mv{k}/h) \cdot \mv{j}h)
 S(\mv{\omega} + 2\pi \mv{k}/h) \md\mv{\omega} \notag
 \\
 &= \int_{[-\pi/h,\pi/h)^d} \exp(i \mv{\omega} \cdot \mv{j} h)
 \wt{S}(\mv{\omega})
 \md\mv{\omega},
 \label{eq:folded-spectrum}
\end{align}
where
\begin{align*}
\wt{S}(\mv{\omega}) &=\sum_{\mv{k}\in\ZZ^d} S(\mv{\omega} + 2\pi \mv{k}/h),
\quad \mv{\omega}\in [-\pi/h,\pi/h)^d.
\end{align*}
If instead the spatial discretisation should be interpreted as the \emph{cell averages} (which is the more usual case for PDE discretisations and e.g.\ satellite data, rather than pointwise values), the spectrum is altered by a multiplicative frequency
filter with a squared sinc function:
\begin{align*}
\wt{S}(\mv{\omega}) &=\sum_{\mv{k}\in\ZZ^d} S(\mv{\omega} + 2\pi \mv{k}/h)
\prod_{l=1}^d
\left\{
\frac{
  \sin[(\omega_l + 2\pi k_l/h)/2]
}{
(\omega_l + 2\pi k_l/h)/2
}
\right\}^2,
\quad \mv{\omega}\in [-\pi/h,\pi/h)^d .
\end{align*}

\subsection{Discrete Fourier transformation}

To approximate the integral in \eqref{eq:folded-spectrum} with FFT, choose
a positive integer $M$ .
This gives a numerical integration approximation
\begin{align}
\wh{R}(\mv{j} h) &=
\left(\frac{\pi}{hM}\right)^d\sum_{\mv{k}\in[-M,M)^d}
\exp\left(i \mv{k} \cdot \mv{j} \frac{2\pi}{2M}\right)
 \wt{S}\left(\mv{k}\frac{\pi}{hM}\right)
 , \quad \mv{j}\in [-M,M)^d ,
 \label{eq:spec2cov}
\end{align}
which is of the form that can be evaluated using FFT.

\subsubsection{Sampling}

With the above theory, sampling from the model can be expressed as an integral with respect to continuous domain complex valued white noise process, $\mmd Z(\mv{\omega})$, with conjugate symmetry:
\begin{align*}
x(\mv{j}h)
 &= \int_{[-\pi/h,\pi/h)^d} \exp(i \mv{\omega} \cdot \mv{j} h)
\wt{S}(\mv{\omega})^{1/2} \md Z(\mv{\omega}), \quad \mv{j}\in \ZZ^d,
\end{align*}
where $\overline{\mmd Z(-\mv{\omega})}=\mmd Z(\mv{\omega})$, $\textsf{Cov}(\mmd Z(\mv{\omega}),\mmd Z(\mv{\omega}'))= \delta(\mv{\omega}-\mv{\omega}')\md\mv{\omega}$. This can be discretised with a lattice of frequencies in much the same way as for computing the covariance function, with noise variances equal to the cell area/volume $\left(\frac{\pi}{hM}\right)^d$ of each frequency lattice point. When the outer pairwise opposing cells are discretised, the combined complex noise contributions are real, and should be assigned to the $-M$ indices, which ensures that the resulting field has no non-zero imaginary components.

\section{Spherical harmonics}
\label{sec:spherical-harmonics}

\newcommand{\F}{\mathcal{F}}
\newcommand{\Fi}{\mathcal{F}^{-1}}

\subsection{Definition and standard properties}
In $\RR^2$, the harmonic functions, sine and cosine, play an important
role as basis functions in spectral representations of functions and
random fields.  On the sphere, this role is instead taken by the
\emph{spherical harmonics}.  This section presents the basic results needed
for spectral representation theory for stationary processes on the sphere.

\begin{definition}
The \emph{spherical harmonic} $Y_{k,m}(\mv{u})$,
$\mv{u}=\mat{u_1,u_2,u_3}^\top\in \sphere^2\subset\RR^3$, of \emph{order}
$k=0,1,2,\ldots$ and \emph{mode} $m=-k,\ldots,k$ is defined by
\begin{align*}
Y_{k,m}(\mv{u}) &=
\sqrt{(2k+1)\cdot\frac{(k-|m|)!}{(k+|m|)!}}
\cdot
\begin{cases}
\sqrt{2} \sin(m\phi) P_{k,-m}(\cos\theta) & -k\leq m <0,\\
P_{k,0}(\cos\theta) & m=0,\\
\sqrt{2} \cos(m\phi) P_{k,m}(\cos\theta) & 0 < m \leq k,
\end{cases}
\end{align*}
where $\phi$ is the longitude and $\theta=\arccos(u_3)$ is the
colatitude, and $P_{k,|m|}(u_3)$ are \emph{associated Legendre
  functions} ($P_{k,0}(u_3)$ are Legendre polynomials).  Note that
$\sin{\phi}=u_2/\sqrt{u_1^2+u_2^2}$, $\cos{\phi}=u_1/\sqrt{u_1^2+u_2^2}$, and
$\cos\theta=u_3$.
\end{definition}
Standard property results for spherical harmonics,
following \citet{wahba_spline_1981}, building the basis of spherical Fourier theory:
\begin{enumerate}
\item The spherical harmonics form an orthogonal basis for functions
  on the unit sphere, $\sphere^2$:
\begin{align*}
\< Y_{k,m}, Y_{k',m'}\>_{\sphere^2} &=
\begin{cases}
4\pi, & k'=k, m'=m,\\
0, & \text{otherwise.}
\end{cases}
\end{align*}
\item The addition formula for spherical harmonics is
\begin{align*}
\sum_{m=-k}^k Y_{k,m}(\mv{u})Y_{k,m}(\mv{v}) &=
(2k+1) P_{k,0}(\mv{u}^\top\mv{v}) .
\end{align*}
\item The spherical harmonics are eigenfunctions to the Laplacian on
  $\sphere^2$,
\begin{align*}
\Delta Y_{k,m}(\mv{u}) &= -k(k+1)Y_{k,m}(\mv{u}) .
\end{align*}
\item Let $\phi(\mv{u})$ be a square-integrable function on $\sphere^2$.
Then $\phi(\mv{u})$ has series expansion
\begin{align*}
\phi(\mv{u}) &=
(\Fi\wh{\phi})(\mv{u}) =
\sum_{k=0}^\infty\sum_{m=-k}^{k} \wh{\phi}(k,m) Y_{k,m}(\mv{u}),
\end{align*}
with \emph{Fourier Bessel} coefficients
$\wh{\phi}(k,m) =
(\F\phi)(k,m) =
\frac{1}{4\pi}
\< {\phi}(\mv{u}), Y_{k,m}(\mv{u}) \>_{\sphere^2(\mmd\mv{u})}$.
Also,
$\< \phi, 1 \>_{\sphere^2} = 4\pi \wh{\phi}(0,0)$ and 
$\< \phi, \phi \>_{\sphere^2} = 4\pi \sum_{k,m} \wh{\phi}(k,m)^2$.
\end{enumerate}

\subsection{Spherical variance approximation}
\label{sec:spherical-variance-approximation}

Define
\begin{align*}
F_{a,b} &= \sum_{k=a}^b \frac{2k+1}{4\pi [\gamma_s^2 + k(k+1)]^\alpha} ,
\end{align*}
so that $F_{0,\infty}$ gives the variance in \eqref{eq:spherical-variance}.
With
\begin{align*}
I_{a,b}&=\int_a^b \frac{2x+1}{4\pi[\gamma_s^2+x(x+1)]^\alpha} \md x \\
&=
\frac{1}{4\pi(\alpha-1)}\left(
\frac{1}{[\gamma_s^2+a(a+1)]^{\alpha-1}}
-
\frac{1}{[\gamma_s^2+b(b+1)]^{\alpha-1}}
\right) ,
\end{align*}
choose $K$ so that the terms in the sum \eqref{eq:spherical-variance}
are decreasing for $k\geq K$. This holds for any $K \geq K_0$,
where $K_0=0$ if $\gamma_s\leq 1/2$, and $K_0=\ceil{\sqrt{\frac{\gamma_s^2-1/4}{2\alpha-1}}-\frac{1}{2}}$ for $\gamma_s > 1/2$. Then the full sum $F_{0,\infty}$ can be bounded by a partial sum $F_{0,K}$ and tail integrals:
\begin{align*}
F_{0,K} + I_{K+1,\infty} &\leq F_{0,\infty} \leq F_{0,K} + I_{K,\infty} .
\end{align*}
Tighter bounds can in principle be obtained for the approximation $F_{0,\infty}\approx F_{0,K} + I_{K+1/2,\infty}$.  Let $f_x$ denote the integrand for $I_{a,b}$. Then a second order Taylor expansion
around each $x=k$ gives the error bound
\begin{align*}
\left|F_{0,K} + I_{K+1/2,\infty} - F_{0,\infty}\right|
&=
\left|I_{K+1/2,\infty} - F_{K+1,\infty}\right|
\leq
\frac{1}{24}
\sum_{k=K+1}^\infty
\sup_{x\in(k-1/2,k+1/2)} \left|f^{''}_{x}\right| .
\end{align*}
It may be possible to construct a bound for this series using another integral bound,
but the practical utility of doing so is unclear.

\section{Collected proofs}\label{app:proofs}

\subsection{Proof of Proposition~\ref{prop:spatial_cov}}  \label{proof:spatial_covar}
The covariance function for spatial lag $\s=\s_2-\s_1$ and temporal lag $t$
can be written as a nested integral,
\begin{align*}
\pCov[u(\mb{0},0),u(\s,t)] &=
\int_{\RR^d}
\int_{\RR}\exp[i (\s\cdot\oms + t\omt)] S_u(\oms,\omt)
\md\omt \md\oms
\\&=
\int_{\RR^d}
\exp(i \s\cdot\oms)
\left\{
\int_{\RR} \exp(i t\omt) S_u(\oms,\omt)
\md\omt
\right\}
\md\oms .
\\&=
\int_{\RR^d}
\exp(i \s\cdot\oms)
S_u(\oms;t)
\md\oms ,
\end{align*}
where the inner integral $S_u(\oms;t)$ is the marginal spatial cross-spectrum for time lag $t$.

Let $\lambda=\gamma_s^2+\|\oms\|^2$ and $\kappa^2=\lambda^{\alpha_s}/\gamma_t^2$.
Then, integrating over $\omt$, we get
\begin{align*}
S_u(\oms;t) &= \frac{1}{(2\pi)^d\gamma_e^2\lambda^{\alpha_e}\gamma_t^{2\alpha_t}}
\int_{\RR} \frac{\e^{i t \omt}}{2\pi (\omt^2 + \lambda^{\alpha_s}/\gamma_t^2)^{\alpha_t}} \md \omt \nonumber \\
&=
\frac{1}{(2\pi)^d\gamma_e^2\lambda^{\alpha_e}\gamma_t^{2\alpha_t}}
\frac{C_{\RR,\alpha_t}}{\kappa^{2(\alpha_t-1/2)}}
R^M_{\alpha_t-1/2}(\kappa t)
\nonumber \\
&=
\frac{C_{\RR,\alpha_t}}{\gamma_e^2 \gamma_t}
\frac{1}{(2\pi)^d(\gamma_s^2+\|\oms\|^2)^{\alpha}}
R^M_{\alpha_t-1/2}\left\{t \sqrt{\gamma_s^2+\|\oms\|^2} / \gamma_t\right\}
,
\end{align*}
where $R^M_{\nu}(t)$ is the standard Mat\'ern correlation with smoothness $\nu$,
defined in \eqref{eq:matern-cov}, and the additional scaling was given in
 \citet{lindgren2011explicit}.
For $t=0$, the temporal constribution factor is $1$, and we recognize the
resulting expression as the spectral density corresponding to a spatial Mat\'ern
covariance function with range parameter $\gamma_s$ and
smoothness parameter $\nu_s = \alpha - d/2$, and marginal variance equal to the
sought value $\sigma^2$ in the proposition. We then also know that the marginal spectrum for $t=0$ in any single spatial dimension is proportional to $(\gamma_s^2+\omega^2)^{-\nu_s+1/2}$, which shows that the conditions on $a$ in Theorem~\ref{thm:cramer} are fulfilled if and only if $a<\nu_s$, so $\nu_s$ is the smoothness index.

\subsection{Proof of Proposition \ref{prop:tempsmooth}} \label{proof:tempsmooth}

Let $\nu_t$ be the smoothness index for the marginal temporal process $u(\s,t)$.
We need to identify for which values of $a$ the integral $I_{a,1}=\int_0^\infty \omt^{2a}\log(1+\omt) S_u(\omt) \md\omt$ in Theorem~\ref{thm:cramer} is finite.
We start by integrating out the spatial spectral dimensions and reparameterising the resulting integral:
\begin{align}
S_u(\omt)
&\propto  \int_{\RR^d}
[\gamma_t^2\omt^{2}+ (\gamma_s^2+\| \mv{\omega}_s \|^2)^{\alpha_s}]^{-\alpha_t}
(\gamma_s^2 + \| \mv{\omega}_s \|^2)^{-\alpha_e} \md\mv{\omega}_s  \nonumber\\
&\propto  \int_{0}^{\infty} r^{d-1}
[\gamma_t^2\omt^{2}+
(\gamma_s^2+r^2)^{\alpha_s}]^{-\alpha_t}
(\gamma_s^2 + r^2)^{-\alpha_e} \md r \nonumber\\
&\propto \int_{0}^{\infty}
v^{(d-2)/2}(1+v)^{-\alpha_e}(\wt{\omega}_t^2 + (1+v)^{\alpha_s})^{-\alpha_t}\md v
\label{eq:initial-integration-steps}
\end{align}
where we in the second step changed to polar coordinates and in the third set $v = r^2/\gamma_s^2$ and $\wt{\omega}_t = \omt\gamma_t/\gamma_s^{\alpha_s}$.
The integral \eqref{eq:initial-integration-steps} is finite for all $\wt{\omega}_t$ when $\alpha_e+\alpha_s\alpha_t > d/2$.
Assuming $a<\nu_t$, we can then write the integral in the smoothness criterion as
\begin{align*}
I_{a,1} &=
\int_0^\infty \omt^{2a}\log(1+\omt) S_u(\omt) \md\omt \\
&=
C_0\int_0^\infty \wt{\omega}_t^{2a}\log\left(1+\frac{\wt{\omega}_t\gamma_s^{\alpha_s}}{\gamma_t}\right)
\int_{0}^{\infty}
v^{(d-2)/2}(1+v)^{-\alpha_e}(\wt{\omega}_t^2 + (1+v)^{\alpha_s})^{-\alpha_t}\md v
\md\wt{\omega}_t
\end{align*}
for some constant $C_0$.
Let $\epsilon > 0$ such that $a+\epsilon<\nu_t$. Then $\log\left(1+\frac{\wt{\omega}_t\gamma_s^{\alpha_s}}{\gamma_t}\right) \leq C_\epsilon\wt{\omega}_t^{2\epsilon}$ for all $\wt{\omega}_t>0$ for some $C_\epsilon>0$. We can then bound $I_{a,1}$ and change the order of integration since the integrands are positive:
\begin{align*}
I_{a,1} &\leq
C_0C_\epsilon\int_0^\infty \wt{\omega}_t^{2(a+\epsilon)}
\int_{0}^{\infty}
v^{(d-2)/2}(1+v)^{-\alpha_e}(\wt{\omega}_t^2 + (1+v)^{\alpha_s})^{-\alpha_t}\md v
\md\wt{\omega}_t \\
&=
C_0C_\epsilon
\int_{0}^{\infty}
v^{(d-2)/2}(1+v)^{-\alpha_e}
\int_0^\infty
\frac{\wt{\omega}_t^{2(a+\epsilon)}}{
(\wt{\omega}_t^2 + (1+v)^{\alpha_s})^{\alpha_t}
}
\md\wt{\omega}_t
\md v .
\end{align*}
The change of variables $w=\frac{\wt{\omega}_t}{(1+v)^{\alpha_s/2}}$ in the inner
integral gives
\begin{align*}
I_{a,1} &\leq
C_0C_\epsilon
\int_{0}^{\infty}
v^{(d-2)/2}(1+v)^{-\alpha_e}
\int_0^\infty
\frac{w^{2(a+\epsilon)}(1+v)^{(a+\epsilon-\alpha_t)\alpha_s}}{
(w^2 + 1)^{\alpha_t}
}
(1+v)^{\alpha_s/2}
\md w
\md v
\\&=
C_0C_\epsilon
\int_{0}^{\infty}
v^{(d-2)/2}(1+v)^{-\alpha_e-\alpha_s(\alpha_t-a-\epsilon-1/2)}
\int_0^\infty
\frac{w^{2(a+\epsilon)}}{
(w^2 + 1)^{\alpha_t}
}
\md w
\md v.
\end{align*}
In this expression, the inner integral is a finite constant, $C_w$, when $2\alpha_t-2a-2\epsilon > 1$, i.e., when $a+\epsilon < \alpha_t-1/2$. Since $\epsilon$ can be chosen arbitrarily small, we can make $C_w$ finite for all $a<\alpha_t-1/2$.
The remaining integral has an integrable singularity at $v=0$ for $d=1$, and the integral is finite when $\alpha_e + \alpha_s(\alpha_t-a-\epsilon-1/2)-(d-2)/2 > 1$. Solving for $a$ and again recognising that $\epsilon$ can be chosen arbitrarily small, we have now shown that $I_{a,1}<\infty$ when both $a<\alpha_t-1/2$ and $a<\frac{\alpha_e+(\alpha_t-1/2)-d/2}{\alpha_s}=\frac{\nu_s}{\alpha_s}$ hold. Therefore the temporal smoothness is
given by $\nu_t=\min(\alpha_t-1/2,\frac{\nu_s}{\alpha_s})$.

We now turn to the special case $d=2$, where we can derive an explicit
expression for the spectral density.
Let $B(x, y)$ be the beta function,
\begin{align*}
B(x,y) = \int_0^1 t^{x-1} (1-t)^{y-1} \md t.
\end{align*}
Making the change of variables $1+x=(1+v)^{\alpha_s}$ in \eqref{eq:initial-integration-steps}
the marginal temporal spectrum becomes
\begin{align*}
S_u(\omt)
&\propto \int_{0}^{\infty}
(1+x)^{-\frac{\alpha_e-1}{\alpha_s}-1}(\wt{\omega}_t^2 + 1 +x)^{-\alpha_t}\md x \qquad [\mbox{formula 3.197.9 in G\&R (p317)}]\\
&\propto B\left(\frac{\alpha_e-1}{\alpha_s}+\alpha_t, 1\right){}_{2}F_{1}\left(\alpha_t, \frac{\alpha_e - 1}{\alpha_s} + \alpha_t, \frac{\alpha_e - 1}{\alpha_s} + \alpha_t + 1; -\wt{\omega}_t^2\right),
\end{align*}
because $ \frac{\alpha_e - 1}{\alpha_s} + \alpha_t = \frac{\nu_s}{\alpha_s} + \frac{1}{2} > 0$.
Finally we verify that this spectrum yields the smoothness parameter implied by the general dimension result.
Assuming that $a-b$ is not an integer, the hypergeometric function ${}_{2}F_{1}(a,b;c,z)$ for large values values of $z$ behaves like
$$
{}_{2}F_{1}(a,b,c,z) \sim c_1 z^{-a} + c_2 z^{-b} + \mathcal{O}(z^{-a-1}) + \mathcal{O}(z^{-b-1})
$$
as $z \to \infty$.
If $a-b$ is an integer we have to multiply $z^{-a}$ or $z^{-b}$
with $\log(z)$ (\citet{erdelyi1953higher}  volume 1, section 2.3.2, page 76).
This extra logarithmic factor will not make a difference for the final smoothness. Thus, we may write
$$
S_t(\omt) = \mathcal{O}(\omt^{-2\alpha_t}) + \mathcal{O}\left(\omt^{-2(\frac{\alpha_e - 1}{\alpha_s} + \alpha_t)}\right) = \mathcal{O}\left(\omt^{-2(\alpha_t + \frac1{\alpha_s}\min(0,\alpha_e - 1))}\right)
$$
for large $\omt$.
This decay rate is such that the conditions in Theorem~\ref{thm:cramer} are if and only if $a<\nu_t$ with
$$
\nu_t = \frac{2(\alpha_t + \frac1{\alpha_s}\min(0,\alpha_e - 1)) - 1}{2} = \alpha_t + \frac1{\alpha_s}\min(0,\alpha_e - 1) - \frac{1}{2} =
\min\left[\alpha_t-\frac1{2}, \frac{\nu_s}{\alpha_s} \right],
$$
which completes the proof.

\subsection{Proof of Theorem~\ref{thm:spacetime-construction}}
\label{app:spacetime-construction}
Define the eigenvector matrix $\bm{V}$
and the eigenvalue (diagonal)
matrix $\bm{\Lambda}=\diag(\lambda_1,\dots,\lambda_{n_s})$ solving the
generalised eigenvalue
problem $\bm{K}_1 \bm{V} = \bm{C} \bm{V} \bm{\Lambda}$.
Since $\mv{K}_1$ and $\mv{C}$ are symmetric and $\mv{K}_1$ is positive definite,
the eigenvectors can be chosen so that
$\bm{V}^\top \bm{C} \bm{V} = \bm{I}$. For general $a=0,1,2,\dots$, $\mv{K}_{a+1}=\mv{K}_a\mv{C}^{-1}\mv{K}_1$, so that $\mv{K}_{a+1}\mv{V}=\mv{K}_a\mv{V}\mv{\Lambda}$. Recursion shows that $\mv{K}_a\mv{V}=\mv{C}\mv{V}\mv{\Lambda}^a$, which also holds for general $a\geq 0$, and $\mv{V}^\top\mv{K}_a\mv{V}=\mv{\Lambda}^a$.

For $\alpha_t=1$, the temporal evolution of the spatial Hilbert space discretisation of \eqref{eq:theorem-general-spde} is determined by
\begin{align*}
\left(\gamma_t\mv{C}\frac{\partial}{\partial t} + \mv{K}_{\alpha_s/2}\right) \mv{u}(t)
&=
\mv{C}
\dEE[\gamma_e^2 \mv{K}_{\alpha_e}](t), \quad t\in \RR .
\end{align*}
A multivariate change of variables $\mv{u}(t)=\mv{V}\mv{z}(t)$ and multiplication by $\mv{V}^\top$ on both sides gives
\begin{align*}
\left(\gamma_t\mv{I}\frac{\partial}{\partial t} + \mv{\Lambda}^{\alpha_s/2}\right) \mv{z}(t)
&=
\mv{V}^\top \mv{C} \dEE[\gamma_e^2 \mv{K}_{\alpha_e}](t) =
\dEE[\gamma_e^2\mv{\Lambda}^{\alpha_e}](t),
\end{align*}
where the precision of the driving noise process follows from
\begin{align*}
\gamma_e^2\left(\mv{V}^\top\mv{C}\mv{K}_{\alpha_e}^{-1}\mv{C}\mv{V}\right)^{-1}
&=
\gamma_e^2 \mv{V}^{-1}\mv{C}^{-1}\mv{K}_{\alpha_e}\mv{C}^{-1}\mv{V}^{-\top}
\\&=
\gamma_e^2 \mv{V}^{\top}\mv{K}_{\alpha_e}\mv{V}=\gamma_e^2 \mv{\Lambda}^{\alpha_e}
.
\end{align*}

For $\alpha_t=2$, the same technique yields
\begin{align*}
\left(-\gamma_t^2\mv{C}\frac{\partial^2}{\partial t^2} + \mv{K}_{\alpha_s}\right) \mv{u}(t)
&=
\mv{C}
\dEE[\gamma_e^2 \mv{K}_{\alpha_e}](t)
\end{align*}
and
\begin{align*}
\left(-\gamma_t^2\mv{I}\frac{\partial^2}{\partial t^2} + \mv{\Lambda}^{\alpha_s}\right) \mv{z}(t)
&=
\dEE[\gamma_e^2\mv{\Lambda}^{\alpha_e}](t) .
\end{align*}
Using the solutions for $\alpha_t=1$ and $2$ as the driving noise processes on the right hand side, the recursive construction technique from \citet{lindgren2011explicit} gives the general spatial discretisations
\begin{align*}
\left(
-\gamma_t^2\mv{C}\frac{\partial^2}{\partial t^2} + \mv{K}_{\alpha_s}
\right)^{\alpha_t/2}
\mv{u}(t)
&=
\mv{C}
\dEE[\gamma_e^2 \mv{K}_{\alpha_e}](t),\\
\left(
-\gamma_t^2\mv{I}\frac{\partial^2}{\partial t^2} + \mv{\Lambda}^{\alpha_s}
\right)^{\alpha_t/2}
\mv{z}(t)
&=
\dEE[\gamma_e^2\mv{\Lambda}^{\alpha_e}](t) ,
\end{align*}
for any $\alpha_t=1,2,\dots$.
Since the evolution of $\mv{z}(t)$ is independent between the vector components,
we get
\begin{align*}
\left(
-\gamma_t^2\frac{\partial^2}{\partial t^2} + \lambda_i^{\alpha_s}
\right)^{\alpha_t/2}
z_i(t)
&=
\frac{1}{\gamma_e\lambda_i^{\alpha_e/2}} \W_i(t) ,\quad \text{for $i=1,\dots,n_s$,}
\end{align*}
where $\lambda_i$ is the $i$:th generalised eigenvalue of $\mv{K}_1$, and $\W_i(\cdot)$ are
white noise processes, independent across all $i$. Rearranging factors, we get
\begin{align*}
\gamma_e\lambda_i^{\alpha_e/2}
\gamma_t^{\alpha_t}
\left(
-\frac{\partial^2}{\partial t^2} + \gamma_t^{-2}\lambda_i^{\alpha_s}
\right)^{\alpha_t/2}
z_i(t)
&=
\W_i(t) ,\quad \text{for $i=1,\dots,n_s$.}
\end{align*}
Applying the temporal condition of the theorem with $b_i=\gamma_e^2\lambda_i^{\alpha_e}\gamma_t^{2\alpha_t}$ and $\kappa_i=\lambda_i^{\alpha_s/2}/\gamma_t$ then gives a the temporal discretisation precision for each $z_i(t)$ as
\begin{align*}
\mv{Q}_{z_i} &=
\sum_{k=0}^{2\alpha_t} b_i \kappa_i^{2\alpha_t-k} \mb{J}_{\alpha_t,k/2}.
\end{align*}
Collecting the processes gives the joint precision as
\begin{align*}
\mv{Q}_{\mv{z}} &=
\sum_{k=0}^{2\alpha_t} \mb{J}_{\alpha_t,k/2}
\otimes
\diag(b_i \kappa_i^{2\alpha_t-k})
=
\gamma_e^2
\sum_{k=0}^{2\alpha_t}
\gamma_t^{k}
\mb{J}_{\alpha_t,k/2}
\otimes
\mv{\Lambda}^{\alpha_e+(2\alpha_t-k)\alpha_s/2}.
\end{align*}
The joint discretisation vector in the original parameterisation is given by
$\mv{u}=(\mv{I}\otimes\mv{V})\mv{z}$, with covariance $\mv{Q}_{\mv{u}}^{-1}=(\mv{I} \otimes \mv{V}) \mv{Q}_{\mv{z}}^{-1} (\mv{I} \otimes \mv{V}^{\top})$. We note that
$\mv{V}^{-\top}\mv{\Lambda}^a\mv{V}^{-1}=\mv{K}_a$,
so that the joint precision matrix becomes
\begin{align*}
\mv{Q}_{\mv{u}} &=
(\mv{I} \otimes \mv{V}^{-\top}) \mv{Q}_{\mv{z}} (\mv{I} \otimes \mv{V}^{-1})
=
\gamma_e^2
\sum_{k=0}^{2\alpha_t}
\gamma_t^{k}
\mb{J}_{\alpha_t,k/2}
\otimes
\mv{K}_{\alpha_e+(\alpha_t-k/2)\alpha_s},
\end{align*}
which completes the proof.

\section{Temporal GMRF representation with stationary boundary conditions}
\label{sec:temporal-disc}

We present precision matrices for
stationary AR(2) (autogregressive order 2) processes,
and then show how this can be used
to construct stationary boundary conditions for GMRF representations of 1st and
second order Whittle-Mat\'ern type stochastic differential equations.

\begin{lemma}\label{lemma:ar2boundary}
Let $u_k$ be a stationary AR(2) process with evolution
$$
a_0 u_k + a_1 u_{t-k} + a_2 u_{k-2} = e_k,
$$
with $a_0 > 0$ and $e_k$ independent, $e_k\sim N(0,1)$.
Then, the precision matrix $\bm{Q}$ for $(u_1,\dots,u_n)$ is quint-diagonal, and,
except for the upper left and lower right $2\times 2$ corners,
 $\bm{Q}$ has diagonal
elements elements $q_0 = a_0^2 + a_1^2 + a_2^2$ and off-diagonal elements $q_1 = a_1(a_0 + a_2)$ and $q_2 = a_0 a_2$.
Further, the corner elements are given by
\begin{align*}
&Q_{0,0} = Q_{n,n} = a_0^2,  &&
Q_{1,1} = Q_{n-1,n-1} = a_0^2 + a_1^2, \\
&Q_{0,1} = Q_{n,n-1} = a_1a_0, &&
Q_{1,0} = Q_{n-1,n} = a_1a_0.
\end{align*}

Conversely, if the inner elements $q_0$, $q_1$, and $q_2$ are known, the
$a_0$, $a_1$, and $a_2$ values can be recovered, and hence the corner elements
be constructed:
Define the constants
\begin{align*}
        b_+ = \sqrt{q_0+ 2q_1+2q_2}, \quad
        b_- = \sqrt{q_0-  2q_1+2q_2}, \quad
        b_s = \frac{b_+ + b_-}{2}.
\end{align*}
Then,
\begin{align*}
        a_0 = \frac{1}{2} \left(b_s + \sqrt{b_s^2 - 4 q_2} \right), \quad
        a_1 = \frac{b_+ - b_-}{2}, \quad
        a_2 = \frac{1}{2} \left(b_s - \sqrt{b_s^2 - 4 q_2} \right).
\end{align*}
\end{lemma}
\begin{proof}
        Follows by direct computation.
        \qqed
\end{proof}

Let $\Phi_t = \{\phi_1(t), ..., \phi_{N_t}(t) \}$ be a
set of piecewise linear basis functions in time,
on a regular grid,
and consider precision matrices on the
coefficients for a linear combination
of these basis functions.
We want to obtain a GMRF representation of a stationary process
Ornstein-Uhlenbeck process $z(t)$, such that
\begin{align}
\kappa z(t) +  \frac{d}{dt} z(t) = b^{-1/2} \epsilon(t),\quad t\in \RR \label{eq:ou1}
\end{align}
where $\epsilon$ is white noise.
However,
we can instead use the equivalent stochastic process model
\begin{align}
\left(\kappa^2  -  \frac{d^2}{dt^2} \right)^{1/2} z(t) = b^{-1/2} \epsilon(t),\quad t\in \RR.
\label{eq:ou2equiv}
\end{align}
Under stationarity, these two models
are equivalent in the sense that they have the same covariance function.
Let $\bm{M}_0 = \left(\langle \phi_i, \phi_j \rangle\right)_{i,j}$,
$\bm{M}_2 = \left(\langle \nabla \phi_i, \nabla \phi_j \rangle\right)_{i,j}$.
Assuming Neumann boundary conditions on a finite interval, and \eqref{eq:ou2equiv},
the precision matrix is
$\bm{Q} = b(\kappa^2 \bm{M}_0 + \bm{M}_2)$,
see  \citet[Sec 2.3]{lindgren2011explicit}.
This matrix does not represent a stationary process on the finite interval.
However, it is quint-diagonal,
and can be corrected to give
a stationary GMRF by adding
$b \kappa \sqrt{1 + h^{2} \kappa^{2}/4} \approx b \kappa$,
to the first
and the last entries of the matrix $\bm{Q}$,
per the previous lemma.
Here, $h$ is the step-size in the mesh, and we assume that $h \kappa$ is small.
Let $\bm M_1$ be a matrix of zeroes,
except the first and last elements which are $1/2$.
We then have a stationary GMRF representation
of the process \eqref{eq:ou2equiv}
with precision matrix
\begin{align}
b(\kappa^2 \bm{M}_0 + 2\kappa \bm{M}_1 + \bm{M}_2). \label{eq:gmrftime}
\end{align}

For second order B-spline basis functions, a similar adjustment can be made to the
initial and final 2-by-2 blocks of the matrix.  In both cases, Taylor expansion of the boundary
correction at a specific $\kappa_0>0$ is likely preferable when the temporal construction is applied to the space-time construction in Theorem~\ref{thm:spacetime-construction}.

\section{Application details}
\label{sec:app-details}

We performed the computations using nodes in the IBEX cluster at KAUST.
After preliminary model fitting with lower resolution spatial mesh we
fitted the model with $1\,251$ mesh nodes.
We used preliminary results to set initial values for the model parameters.
The computations were then carried out on a computer node with
a Intel (cascadelake) processor with $48$ threads and $3022.6$GB of RAM.
The parallel computations were performed with \texttt{inlabru} via \texttt{R-INLA}
with the \texttt{PARDISO} library, using
$3$ parallel evaluations of the posterior, each one using $16$ threads.
The average time per function evaluation were
$43.06$ seconds, $52.27$ seconds, $62.56$ seconds and $102.08$ seconds,
respectively for models $\Ma$, $\Md$, $\Mc$ and $\Md$.
The respective number of evaluations of the posterior density were
$243$, $259$, $211$ and $169$, and the total computing time
$3.17$ hours, $4.05$ hours, $3.96$ hours and $5.09$ hours.

The computed results were used for the within-sample and leave-one-out
prediction scores in Table~\ref{tab1gt}, as well as for the multi-horizon
forecast assessment in Section~\ref{sec:forecast-evaluation}.
Details of the multi-horizon
forecast scores are shown in Figures~\ref{fig:multi-horizon-months}
and~\ref{fig:multi-horizon-months-diff}, including
the mean error (ME, estimated forecast bias), mean absolute error (MAE),
mean squared error (MSE), mean Dawid-Sebastiani scores (DS), mean
continuous ranked probability score (CRPS), and scale-invariant CRPS (SCRPS).

\begin{figure}[t]
        \begin{center}
                \includegraphics[width=0.89\linewidth]{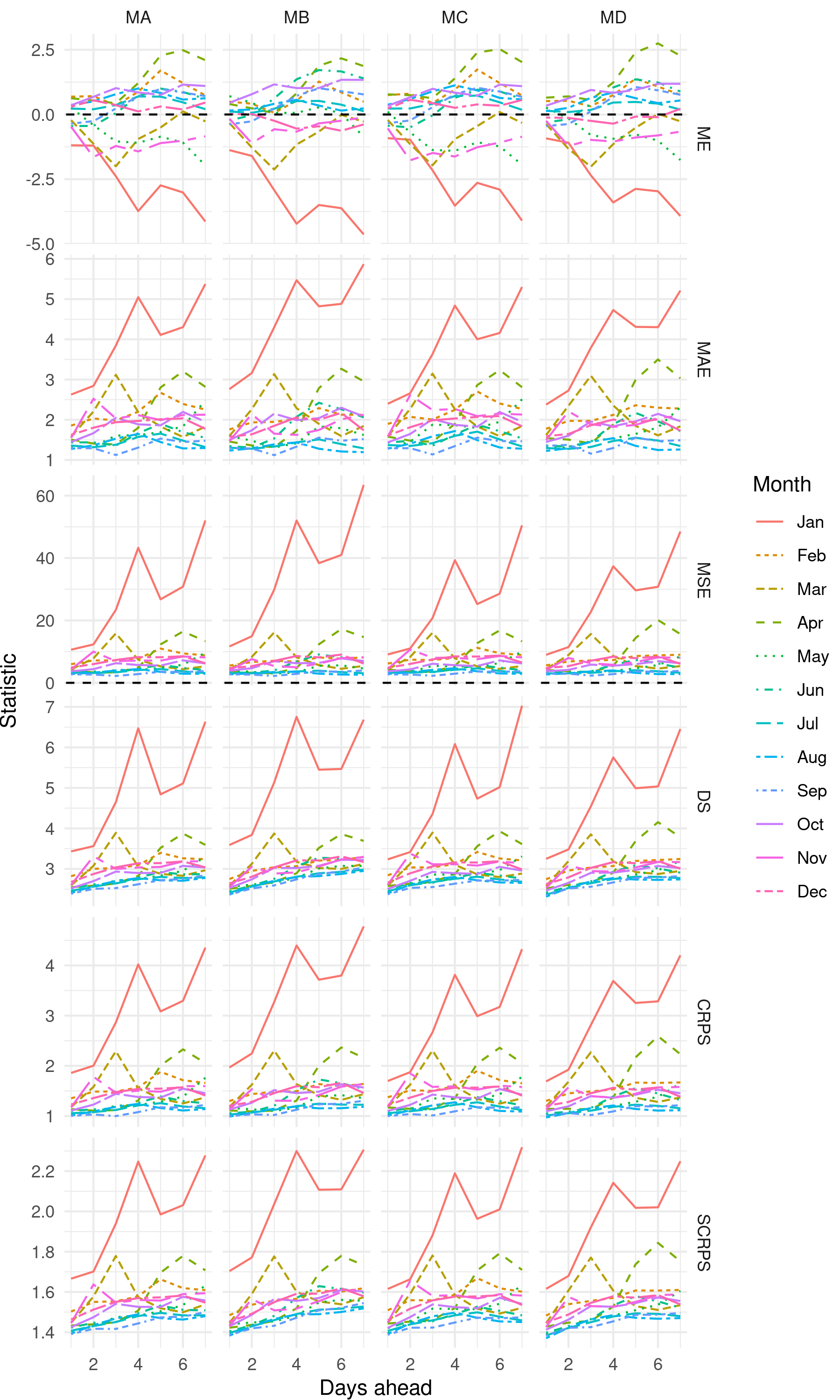}
        \end{center}
            \vspace{-0.5cm}
        \caption{
            Mean error and prediction score averages for each model, for each forecast horizon (1--7) and each month of the year, for the multi-horizon multi scenario setting.
            }
        \label{fig:multi-horizon-months}
\end{figure}
\begin{figure}[t]
        \begin{center}
                \includegraphics[width=0.89\linewidth]{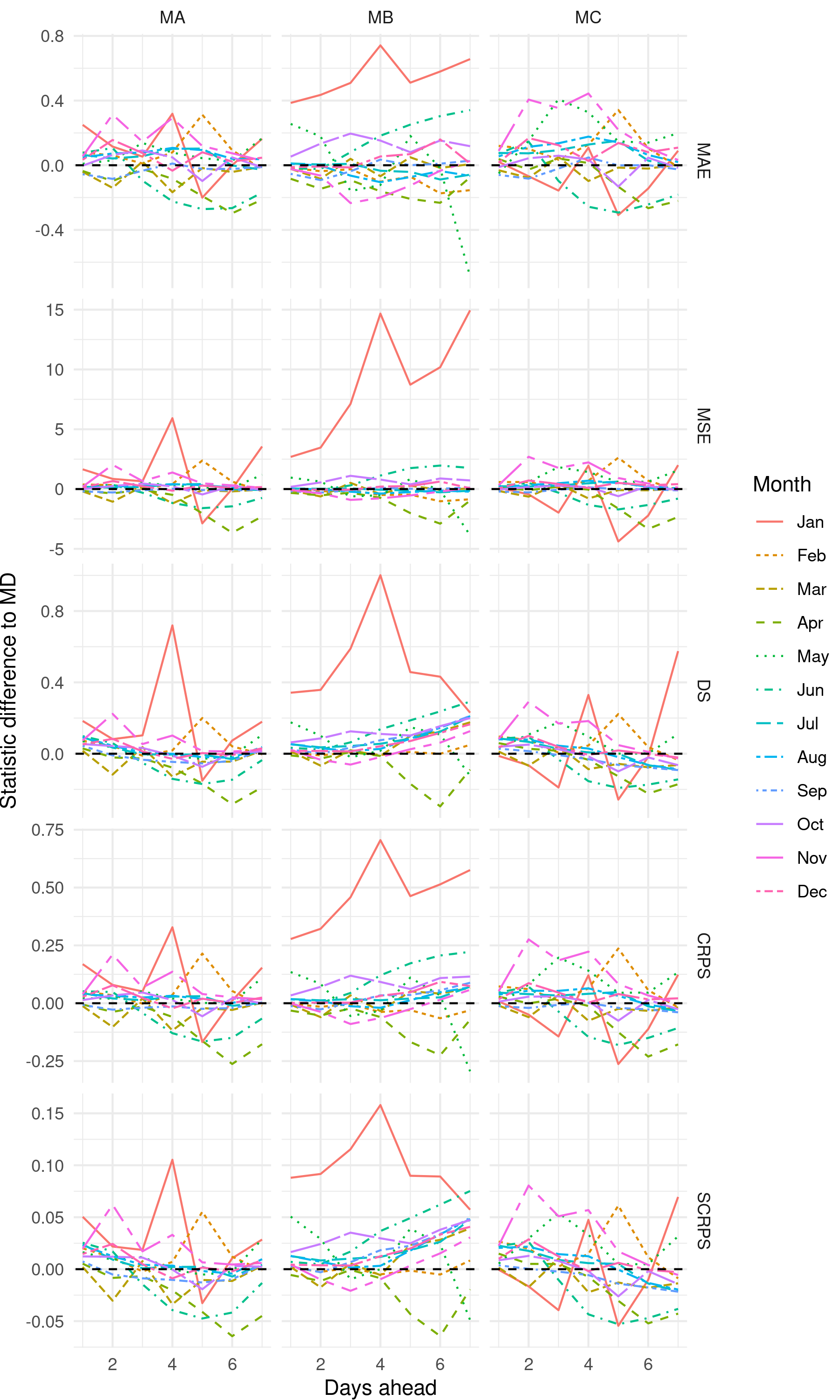}
        \end{center}
            \vspace{-0.5cm}
        \caption{
            Prediction score averages for each model with the scores for model $\Md$ subtracted,
            for each forecast horizon (1--7) and each month of the year, for the multi-horizon multi scenario setting.}
        \label{fig:multi-horizon-months-diff}
\end{figure}

\end{document}